\newtheorem{definition}{Definition}
\newtheorem{theorem}{Theorem}
\newenvironment{proof}{\begin{IEEEproof}}{\end{IEEEproof}}
\newcommand\MYhyperrefoptions{bookmarks=true,bookmarksnumbered=true,
pdfpagemode={UseOutlines},plainpages=false,pdfpagelabels=true,
colorlinks=true,linkcolor={blue},citecolor={blue},urlcolor={blue},
pdftitle={Fast Utility Mining on Complex Sequences},
pdfsubject={Typesetting},
pdfauthor={Wensheng Gan},
pdfkeywords={data mining, sequence, high-utility sequential pattern, efficiency, UL-list structure}}
\begin{document}
%
\title{Fast Utility Mining on Complex Sequences}
%
%
%
%


\author{Wensheng Gan,
	Jerry Chun-Wei Lin,~\IEEEmembership{Senior,~Member},
	Jiexiong Zhang,
	Philippe Fournier-Viger,\\
	Han-Chieh Chao,~\IEEEmembership{Senior,~Member}
	and Philip S. Yu,~\IEEEmembership{Fellow,~IEEE}
	\IEEEcompsocitemizethanks{
		\IEEEcompsocthanksitem Wensheng Gan is with Harbin Institute of Technology (Shenzhen), Shenzhen, China, and with University of Illinois at Chicago, IL, USA. Email: wsgan001@gmail.com
		
		\IEEEcompsocthanksitem Jerry Chun-Wei Lin is with the Western Norway University of Applied Sciences, Bergen, Norway. Email: jerrylin@ieee.org
		
		\IEEEcompsocthanksitem Jiexiong Zhang is with Harbin Institute of Technology (Shenzhen), Shenzhen, China. Email: jiexiong.zhang@foxmail.com
		
		\IEEEcompsocthanksitem Philippe Fournier-Viger is with Harbin Institute of Technology (Shenzhen), Shenzhen, China. Email: philfv8@yahoo.com
		
		\IEEEcompsocthanksitem Han-Chieh Chao is with the National Dong Hwa University, Hualien, Taiwan. Email: hcc@ndhu.edu.tw
		
		\IEEEcompsocthanksitem Philip S. Yu is with University of Illinois at Chicago, IL, USA. Email: psyu@uic.edu}
	
}

\IEEEtitleabstractindextext{%
\begin{abstract}
High-utility sequential pattern mining is an emerging topic in the field of Knowledge Discovery in Databases. It consists of discovering subsequences having a high utility (importance) in sequences, referred to as high-utility sequential patterns (HUSPs). HUSPs can be applied to many real-life applications, such as market basket analysis, E-commerce recommendation, click-stream analysis and scenic route planning. For example, in economics and targeted marketing, understanding economic behavior of consumers is quite challenging, such as finding credible and reliable information on product profitability. Several algorithms have been proposed to address this problem by efficiently mining utility-based useful sequential patterns. Nevertheless, the performance of these algorithms can be unsatisfying in terms of runtime and memory usage due to the  combinatorial explosion of the search space for low utility threshold and large databases. Hence, this paper proposes a more efficient algorithm for the task of high-utility sequential pattern mining, called HUSP-ULL. It utilizes a lexicographic sequence (LS)-tree and a utility-linked (UL)-list structure to fast discover HUSPs. Furthermore, two pruning strategies are introduced in HUSP-ULL to obtain tight  upper-bounds on the utility of candidate sequences, and reduce the search space by pruning unpromising candidates early. Substantial experiments both on real-life and synthetic datasets show that the proposed algorithm can effectively and efficiently discover the complete set of HUSPs and outperforms the state-of-the-art algorithms.
\end{abstract}

\begin{IEEEkeywords}
	Economic behavior, utility theory, utility mining, sequence, Linked-list structure.
\end{IEEEkeywords}}

\maketitle

\section{Introduction}
Sequential pattern mining (SPM) \cite{agrawal1995mining,srikant1996mining,pei2001prefixspan,fournier2017survey} is an interesting and critical area of research in Knowledge Discovery in Databases (KDD) \cite{agrawal1993database,chen1996data}, which plays a key role in various applications such as DNA sequence analysis, consumer behavior analysis, and natural disaster analysis \cite{fournier2017survey}. The main objective of SPM is to discover a set of frequent sequences in a sequence database, selected with respect to a user-specified minimum support threshold, and where the frequency of each sequence is defined as its occurrence count in the database. Since data/information quality may be influenced by the sequential ordering of the events, for example, to assess the data/information quality on the Weblog data, it is thus important to take this attribute into account for providing more precise assessment of data/information quality.

SPM is similar to frequent itemset mining (FIM) \cite{agrawal1994fast,han2004mining}, as it is designed to discover patterns that frequently occur in data.  The implicit assumption of FIM and SPM is that frequent patterns are useful and interesting. For example, if may be found that numerous customers purchase beer and diapers together, which may be an interesting information for a business manager. The main difference between SPM and FIM is that SPM generalizes FIM by considering the sequential ordering of purchases. Thus, mining interesting patterns in a sequential database using SPM is  more challenging than FIM. 

One significant shortcoming of traditional sequential pattern mining is that all objects (items, event, sequence, movements, etc.) are treated equally. In fact, the most frequently occurring patterns can be, quite typically, the least interesting ones. In general, criteria such as the interestingness, weight, and importance of patterns are not taken into account in traditional SPM and FIM. Consequently, these frameworks can reveal many patterns that are frequent but uninteresting to decision makers. To better measure the importance of patterns to decision makers, other criteria can be considered such as the amount of profit (utility) that each pattern yields. For example, in market basket analysis, the item diamond may not be considered as a frequent pattern if its selling frequency is low, unlike the item egg. However, some infrequent patterns such as diamonds may yield a higher profit than egg. To address this issue, FIM was generalized to obtain the problem of high-utility itemset mining (HUIM) \cite{chan2003mining,yao2004foundational,2lin2016efficient,liu2012mining,liu2005two}. 
This latter takes both the purchase quantities and unit profits of items into account to identify the set of high-utility itemsets (HUIs). In HUIM, a quantitative value is associated to each item in each transaction to indicate the number of units of the item that were purchased. This is different from the traditional FIM problem where these values are restricted to binary values indicating that an item either appears or not in a transaction. Besides, an external profit value is associated to each item in HUIM to indicate its relative importance such as its unit profit or weight.  Because more HUIM takes more information into account than FIM and because the downward closure property of FIM (also called Apriori property \cite{agrawal1994fast}) does not hold in HUIM, HUIM is considered as more challenging than FIM.

\begin{figure}[!htbp]
	\setlength{\abovecaptionskip}{0pt}
	\setlength{\belowcaptionskip}{0pt} 		
	\centering
	\includegraphics[width=3.55in]{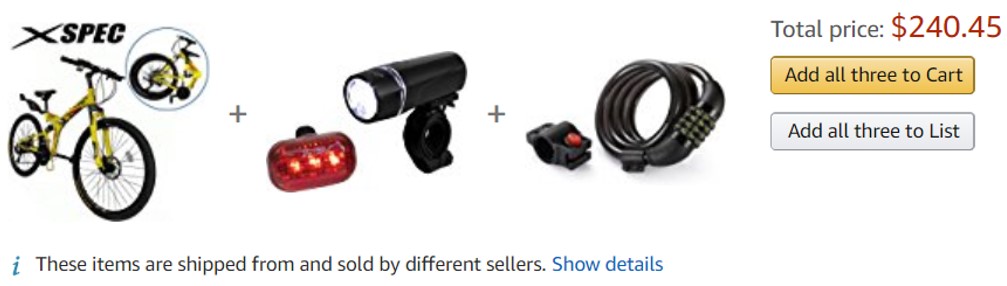}
	\caption{A shopping example in Amazon.}
	\label{fig:exampleOfHUSPM}
\end{figure}

Recently, to extract more informative patterns from  ordered data (sequences), SPM has been  generalized as the task of high-utility sequential pattern mining (HUSPM) \cite{yin2012uspan,yin2013efficiently,lan2014applying,wang2016efficiently}. Different from SPM, HUSPM considers not only the sequential ordering of items but also their utility values. Hence, HUSPM is more difficult than traditional SPM and HUIM. As shown in Fig. \ref{fig:exampleOfHUSPM}, a customer wants to purchase a mountain trail Bicycle, LED Headlight, and the UShake Bike lock, which have their unit prices. In general, these products/items are shipped from and sold by different sellers. Firstly, he/she will find out one low price mountain trail Bicycle from a seller, then continue to search the satisfied  LED Headlight with low price from another seller. Finally, he/she may buy a Bike lock from the same or different store. In this case, the consumers' purchase behavior consists of a series of utility-oriented sequential events/processes within different timestamps.

A sequence is said to be a high-utility sequential pattern (HUSP) if its total utility in a database is no less than a user-specified minimum utility threshold. HUSPs can be applied to many real-life applications, such as market basket analysis, E-commerce recommendation, click-stream analysis and scenic route planning. In HUSPM, however, the utility of a pattern is neither monotonic nor anti-monotonic. Therefore, the downward closure property of support (aka the Apriori property) does not hold in HUSPM and the search space is quite difficult to prune. Previous approaches have proposed methods for determining upper bounds on the utility of future candidate sequential patterns.

Since high-utility sequential pattern mining has many applications, more and more researchers are working on this problem. Several algorithms were respectively developed to efficiently mine the complete set of high-utility sequential patterns. However, these algorithms often consume a large amount of memory and have long execution times due to the combinatorial explosion of the search space. Besides, as information collection techniques are continuously  improved, the data that needs to be analyzed grows quickly. As a result, it is important to design more efficient algorithms to fast discover high-utility sequential patterns in large databases.

To address these problems, this paper designs a novel utility-linked list (UL-list) based algorithm called HUSP-ULL for mining the set of HUSPs more efficiently. The major contributions of this paper are as follows:

\begin{enumerate}
	\item \textbf{Insightful patterns}. A novel fast algorithm is proposed to efficiently identify meaningful and profitable HUSPs. It employs a utility-linked list structure and two pruning strategies to improve its mining performance.
	\item \textbf{Novel index structures}. A lexicographic-sequence (LS)-tree is introduced to represent the search space and mine the complete set of HUSPs. A compressed utility-linked (UL)-list structure is further designed to store information about patterns instead of processing the original database. UL-list is quite compact and different from the current existing data structures for utility mining.
	\item \textbf{Effective pruning}. In addition, two pruning strategies are integrated in the designed algorithm to reduce the search space and improve its performance to discover HUSPs.
	\item \textbf{Fast and better scalability}. Experimental results show that the proposed algorithm  can efficiently discover HUSPs and outperforms the existing state-of-the-art HUSPM algorithms, in terms of runtime, memory usage, unpromising pattern filtering, and scalability.
\end{enumerate}

The rest of this paper is organized as follows. Related work is briefly reviewed in Section \ref{sec:relatedwork}. Preliminaries and the problem statement of high-utility sequential pattern mining are presented in Section \ref{sec:problem}. The proposed HUSP-ULL algorithm with the UL-list and two pruning strategies are presented in Section \ref{sec:algorithm}. An experimental evaluation of the designed algorithms is provided in Section \ref{sec:experiment}.  Finally, a conclusion is presented and  some opportunities for future work  are described in Section \ref{sec:conclusion}.

\section{Related Work}
\label{sec:relatedwork}

We structure the related work around the two main elements that this paper addresses: high-utility itemset mining and high-utility sequential pattern mining. 

\subsection{High-Utility Itemset Mining}

The problem of high-utility itemset mining (HUIM) \cite{chan2003mining,yao2004foundational} was designed to find the set of high-utility itemsets (HUIs),  i.e. the itemsets that have utility values that are greater than or equal to a minimum utility threshold.  Since HUIM does not provide a downward closure property to reduce the search space, unlike association rule mining (ARM) \cite{agrawal1994fast}, it is necessary to find other strategies for reducing the search space. To obtain a downward closure property that can be used in HUIM, Liu et al. \cite{liu2005two} designed a property called the transaction-weighted downward closure (TWDC)  and defined a set of candidate itemsets called the high transaction-weighted utilization itemsets (HTWUIs).  Based on the TWU and HTWUIs, Liu et al. designed an algorithm named Two-Phase to mine HUIs. This latter first discovers the set of HTWUIs using a breadth-first search and then select HUIs in the discovered HTWUIs. To obtain better performance for mining HUIs, tree-based HUIM algorithms were introduced such as IHUP \cite{ahmed2009efficient}, UP-Growth \cite{tseng2010up} and UP-Growth$ ^{+} $ \cite{tseng2013efficient}.  To reduce the number of candidates, Liu et al. then~\cite{liu2012mining} proposed the HUI-Miner algorithm, which efficiently discovers  HUIs using a vertical structure called utility-list. This procedure identifies HUIs without generating candidates and without performing multiple database scans.

Up to now, the development of HUIM algorithms has been extensively studied, and many algorithms have been published to mine different kinds of HUIs in many real-life applications. On the one hand, some utility mining algorithms focus on the mining efficiency, such as FHM \cite{fournier2014fhm}, EFIM \cite{zida2015efim} and d$ ^{2} $HUP \cite{liu2012direct}. On the other hand, many models and algorithms mainly aim at the study of utility-oriented mining effectiveness. For example, discovering various kinds of HUIs such as mining HUIs in uncertain databases \cite{2lin2016efficient}, mining the top-\textit{k} HUIs without setting the minimum utility threshold \cite{tseng2016efficient}, exploiting non-redundant correlated utility patterns \cite{lin2017fdhup,gan2017extracting}, extracting the up-to-date HUIs which can show the trends \cite{lin2015efficient}, and mining on-shelf HUIs from the temporal databases \cite{lan2011discovery}. Yun et al. \cite{yun2018damped} proposed a damped window to extract high average utility patterns over data streams. In contrast to static data, the dynamic data is more complex and desirable in many real-life applications. Several dynamic utility mining models \cite{yun2017efficient,gan2018survey1} have been proposed to deal with dynamic data.

\subsection{High-Utility Sequential Pattern Mining}

Sequential pattern mining (SPM) \cite{agrawal1995mining,han2000freespan,pei2001prefixspan,srikant1996mining} is important as it considers the sequential ordering of itemsets, which is important for applications such as behavior analysis, DNA sequence analysis, and weblog mining.  SPM was proposed by Agrawal \cite{agrawal1995mining} and has been extensively studied. Several efficient algorithms have been developed  such as GSP \cite{srikant1996mining}, FreeSpan \cite{han2000freespan}, PrefixSpan \cite{pei2001prefixspan}, SPADE \cite{zaki2001spade} and SPAM \cite{ayres2002sequential}. Several recent literature surveys of the development of SPM can be further referred to \cite{fournier2017survey,3gan2018survey}. Traditional SPM algorithms rely on the frequency/support framework for discovering frequent sequences, which does not take business interests into account. 
High-utility sequential pattern mining (HUSPM) \cite{alkan2015crom,wang2016efficiently,yin2012uspan} was developed by combining HUIM and SPM to mine high-utility sequential patterns (HUSPs). It was first used for mining high-utility path traversal patterns of web pages \cite{zhou2007utility}, high-utility web access sequences \cite{ahmed2010mining}, and high-utility mobile sequential patterns \cite{shie2013efficient,shie2011mining}. However, the above algorithms can only handle simple sequences. Ahmed et al. \cite{ahmed2010novel} designed a level-wise approach called UL and a pattern-growth approach named US for HUSPM. HUSPM takes ordered sequences as input and reveals sequential patterns having high utilities, which has been a challenging and important issue in recent decades.
Hence, Yin et al. \cite{yin2012uspan} proposed a formal framework for HUSPM and introduced an efficient USpan algorithm to discover HUSPs. Information about the utility of each node in the tree is stored in a developed matrix for mining HUSPs without performing multiple database scans. Two pruning strategies based on the sequential-weighted downward closure property and on the remaining utility model were designed to reduce the search space for mining HUSPs. However, USpan may fail to discover the complete HUSPs due to the used upper bound \cite{gan2019proum}.

To facilitate parameter setting for HUSP mining, Yin et al. proposed the TUS algorithm, which discovers the top-\textit{k} HUSPs. Lan et al. \cite{lan2014applying} then proposed a projection-based HUSP approach with uses a sequence-utility upper-bound (\textit{SUUB}) to mine high-utility sequential patterns using the maximum utility measure. The maximum utility measure was designed to find a small set of patterns, which provides rich information about HUSPs. A novel indexing strategy and a sequence utility table containing actual utilities and upper-bound on utilities of candidates was used to improve the mining performance. Then, to further improve the mining performance, Alkan et al. \cite{alkan2015crom} proposed the high-utility sequential pattern extraction (HuspExt) algorithm. It calculates a Cumulated Rest of Match (\textit{CRoM}) to obtain  an upper-bound on utility values and prune unpromising candidates early. Recently, Wang et al. \cite{wang2016efficiently} developed two tight utility upper-bounds, named prefix extension utility (\textit{PEU}), and reduced sequence utility (\textit{RSU}) to speed up the discovery of HUSPs. An efficient HUS-Span algorithm was further proposed to mine HUSPs and a TKHUS-Span algorithm was also developed to identify the top-\textit{k} HUSPs \cite{wang2016efficiently}. Gan et al. \cite{gan2019proum} proposed an efficient projection-based utility mining approach named ProUM to discover high-utility sequences by utilizing the upper bound of sequence extension utility (\textit{SEU}) and the utility-array structure. 

Wu et al. \cite{wu2013mining} proposed a model to extract high-utility episodes in complex event sequences. In order to get insights that facilitate decision making for expert and intelligent systems, Lin et al. introduced the utility-based episode rules for investment \cite{lin2015discovering}. Recently, an incremental model for HUSP mining is introduced in \cite{wang2018incremental}. The comprehensive review of utility-oriented pattern mining can be referred to \cite{gan2018survey1,gan2018survey,gan2018privacy}.

\section{Preliminaries and Problem Statement}
\label{sec:problem}
In this section, we introduce notations and concepts used in the paper. Then, we give formal problem definition.

\subsection{Notations and Concepts}

Let \textit{I} = \{$ i_{1} $, $ i_{2} $, $\dots$, $ i_{m} $\} be a finite set of distinct items (symbols). 
A quantitative itemset, denoted as \textit{v} = [($ i_{1}, q_{1} $) ($ i_{2}, q_{2} $) $\dots$ ($ i_{c}, q_{c} $)], is a subset of \textit{I} and each item in a quantitative itemset is associated with a quantity (internal utility). 
An itemset, denoted as $ w $ = [$ i_{1} $, $ i_{2} $, $\dots$, $ i_{c} $], is a subset of $ I $ without quantities. Without loss of generality, we assume that items in an itemset (quantitative itemset) are listed in \textit{alphabetical} order since items are unordered in an itemset (quantitative itemset). 
A quantitative sequence is an ordered list of one or more quantitative itemsets, which is denoted as $ s $ = $ < $$ v_{1} $, $ v_{2} $, $\dots$, $ v_{d} $$ > $. 
A sequence is an ordered list of one or more itemsets without quantities, which is denoted as $ t $ = $ < $$ w_{1} $, $ w_{2} $, $\dots$, $ w_{d} $$ > $.

For convenience, in the following "quantitative" will be abbreviated as  "\textit{q}-". 
Thus, the term "\textit{q}-sequence" will be used to refer to a sequence with quantities, and "sequence" to refer to sequences without quantities. 
Similarly, a "\textit{q}-itemset" is an itemset having quantities, while "itemset" refers to an itemset that does not have quantities. 
For example, $ < $[(\textit{a}, 2) (\textit{b}, 1)], [(\textit{c}, 3)]$ > $ is a \textit{q}-sequence while $ < $[\textit{ab}], [\textit{c}]$ > $ is a sequence. [(\textit{a}, 2) (\textit{b}, 1)] is a \textit{q}-itemset and [\textit{ab}] is an itemset. A quantitative sequential database is a set of transactions \textit{D} = \{$ S_{1} $, $ S_{2} $, $\dots$, $ S_{n} $\}, where each transaction $ S_{q}\in D $ is a \textit{q}-sequence, and has a unique identifier \textit{q} called its \textit{SID}. In addition, each item in \textit{D} is associated with a profit (external utility), which is denoted as $ pr(i_{j}) $.

Consider the following running example. A quantitative sequential database is shown in Table \ref{table:db}. This database has 6 transactions and 6 items. Table \ref{table:profit} is a utility table that provides a unit profit for each item of Table \ref{table:db}. 
In the running example, [(\textit{a}:2)(\textit{c}:3)] is the first \textit{q}-itemset of transaction $ S_{1} $. The quantity of an item (\textit{a}) in this \textit{q}-itemset is 2, and its utility is calculated as $ 2 \times 5 = 10 $. 

\begin{table}[!htbp]
	\setlength{\abovecaptionskip}{0pt}
	\setlength{\belowcaptionskip}{0pt} 
	\caption{A Quantitative Sequential Database.}
	\centering
	\begin{tabular}{|c|c|}
		\hline
		\textbf{SID}	 & \textbf{Q-sequence} \\ \hline
		$ S_{1} $ & $ < $[(\textit{a}:2)(\textit{c}:3)], [(\textit{a}:3)(\textit{b}:1)(\textit{c}:2)], [(\textit{a}:4)(\textit{b}:5)(\textit{d}:4)], [(\textit{e}:3)]$ > $  \\ \hline
		$ S_{2} $ & $ < $[(\textit{a}:1)(\textit{e}:3)], [(\textit{a}:5)(\textit{b}:3)(\textit{d}:2)], [(\textit{b}:2)(\textit{c}:1)(\textit{d}:4)(\textit{e}:3)]$ > $  \\ \hline
		$ S_{3} $ & $ < $[(\textit{e}:2)], [(\textit{c}:2)(\textit{d}:3)], [(\textit{a}:3)(\textit{e}:3)], [(\textit{b}:4)(\textit{d}:5)]$ > $  \\ \hline
		$ S_{4} $ & $ < $[(\textit{b}:2)(\textit{c}:3)], [(\textit{a}:5)(\textit{e}:1)], [(\textit{b}:4)(\textit{d}:3)(\textit{e}:5)]$ > $  \\ \hline
		$ S_{5} $ & $ < $[(\textit{a}:4)(\textit{c}:3)], [(\textit{a}:2)(\textit{b}:5)(\textit{c}:2)(\textit{d}:4)(\textit{e}:3)]$ > $  \\ \hline
		$ S_{6} $ & $ < $[(\textit{f}:4)], [(\textit{a}:5)(\textit{b}:3)], [(\textit{a}:3)(\textit{d}:4)]$ > $  \\ \hline
	\end{tabular}
	\label{table:db}
\end{table}

\begin{table}[!htbp] 
	\setlength{\abovecaptionskip}{0pt}
	\setlength{\belowcaptionskip}{0pt} 		
	\caption{An Utility Table.}
	\centering
	\begin{tabular}{|c|c|c|c|c|c|c|}
		\hline
		\textbf{Item} &	\textit{a} & \textit{b} & \textit{c} &	\textit{d} & \textit{e} & \textit{f} \\ \hline
		\textbf{Profit (\$)} & 5 & 3 & 4 & 2 & 1 & 6 \\ \hline
	\end{tabular}
	\label{table:profit}	
\end{table}

\begin{definition}
	The utility of an item ($ i_{j} $) in a \textit{q}-itemset \textit{v} is denoted as $ u(i_{j}, v) $, and defined as:
	\begin{equation}
	u(i_{j}, v) = q(i_{j}, v)\times pr(i_{j}),
	\end{equation}
	where $ q(i_{j}, v) $ is the quantity of ($ i_{j} $) in $ v $, and $ pr(i_{j}) $ is the profit of ($ i_{j} $).
\end{definition}

For instance, the utility of item (\textit{c}) in the first \textit{q}-itemset of $ S_{1} $ in Table \ref{table:db} is calculated as: $ u(c, [(a:2) (c:3)]) $ = $ q(c, [(a:2) (c:3)])\times pr(c)$ = 3 $\times$ \$4 = \$12.

\begin{definition}
	The utility of a \textit{q}-itemset $ v $ is denoted as $ u(v) $ and defined as:
	\begin{equation}
	u(v) = \sum_{i_{j}\in v}u(i_{j}, v).
	\end{equation}
\end{definition}

For example in Table \ref{table:db}, \textit{u}([(\textit{a}:2)(\textit{c}:3)]) = \textit{u}(\textit{a}, [(\textit{a}:2)(\textit{c}:3)]) + \textit{u}(\textit{c}, [(\textit{a}:2)(\textit{c}:3)]) = 2 $\times$ \$5 + 3 $\times$ \$4 = \$22.

\begin{definition}	
	The utility of a \textit{q}-sequence $ s $ = $<$$v_{1}, v_{2}, \dots, v_{d}$$>$ is defined as:
	\begin{equation}
	u(s) = \sum_{v\in s}u(v).
	\end{equation}
\end{definition}

For instance, consider  Table \ref{table:db}. We have that $ u(S_{1}) $ = \textit{u}([(\textit{a}:2)(\textit{c}:3)]) + \textit{u}([(\textit{a}:3)(\textit{b}:1)(\textit{c}:2)]) + \textit{u}([(\textit{a}:4)(\textit{b}:5)(\textit{d}:4)]) + \textit{u}([(\textit{e}:3)]) = \$22 + \$26 + \$43 + \$3 = \$94.

\begin{definition}
	The utility of a quantitative sequential database \textit{D} is the sum of the utility of each of its q-sequences:
	\begin{equation}
	u(D) = \sum_{s\in D}u(s).
	\end{equation}			
\end{definition}

For example,  $ u(D) $ = $ u(S_{1}) $ + $ u(S_{2}) $ + $ u(S_{3}) $ + $ u(S_{4}) $ + $ u(S_{5}) $ + $ u(S_{6}) $ = \$94 + \$67 + \$56 + \$67 + \$76 + \$81 = \$441, as shown in Table \ref{table:db}.

\begin{definition}
	Given a \textit{q}-sequence \textit{s} = $<$$v_{1}, v_{2}, \dots, v_{d}$$>$ and a sequence $ t $ = $<$$w_{1}, w_{2}, \dots, w_{d'}$$>$, if $ d $ = $ d' $ and the items in $ v_{k} $ are the same as the items in $ w_{k} $ for $ 1\leq k\leq d $, $ t $ matches $ s $, which is denoted as $ t\sim s $.	
\end{definition}

For instance, in Table \ref{table:db}, $<$[\textit{ac}], [\textit{abc}], [\textit{abd}], [\textit{e}]$>$ matches $ S_{1} $. Note that it is possible that a sequence has more than one match in a $q$-sequence. For instance, $<$[\textit{a}],[\textit{b}]$>$ has three matches as $<$[\textit{a}:2],[\textit{b}:1]$>$, $<$[\textit{a}:2],[\textit{b}:5]$>$ and $<$[\textit{a}:3],[\textit{b}:5]$>$ in $ S_{1} $. Because of this, HUSP is generally considered  as more challenging than SPM and HUIM.

\begin{definition}
	Let there be some itemsets $ w $ and $ w' $. The itemset $ w $ is contained in $ w' $ (denoted as $ w\subseteq w' $) if $ w $ is a subset of $ w' $ or $ w $ is the same as $ w' $. 
	Given two \textit{q}-itemsets $ v $ and $ v' $, $ v $ is said to be contained in $ v' $ if for any item in $ v $, there exists the same item having the same quantity in $ v' $. This is denoted as $ v\subseteq v' $. 
\end{definition}

For example,  the itemset [\textit{ac}] is contained in the itemset [\textit{abc}] in Table \ref{table:db}. 
The \textit{q}-itemset [(\textit{a}:2)(\textit{c}:3)] is contained in [(\textit{a}:2)(\textit{b}:1)(\textit{c}:3)] and [(\textit{a}:2)(\textit{c}:3)(\textit{e}:2)], but [(\textit{a}:2)(\textit{c}:3)] is not contained in [(\textit{a}:2)(\textit{b}:3)(\textit{c}:1)] and [(\textit{a}:4)(\textit{c}:3)(\textit{d}:4)].

\begin{definition}
	Let there be some sequences \textit{t} = $<$$w_{1}, w_{2}, \dots, w_{d}$$>$ and $ t' $ = $<$$w'_{1}, w'_{2}, \dots, w'_{d'}$$>$. The sequence $ t $ is contained in $ t' $ (denoted as $ t\subseteq t $') if there exists an integer sequence $ 1\leq k_{1}\leq k_{2}\leq\dots\leq d' $ such that $ w_{j}\subseteq w'_{k_{j}} $ for $ 1\leq j\leq d $. 
	Let there be two \textit{q}-sequences $ s $ = $<$$v_{1}, v_{2}, \dots, v_{d}$$>$ and $ s' $ = $<$$v'_{1}, v'_{2}, \dots, v'_{d'}$$>$.  $ s $ is said to be contained in $ s' $ (denoted  as $ s\subseteq s' $) if there exists an integer sequence $ 1\leq k_{1}\leq k_{2} \leq\dots\leq d' $ such that $ v_{j}\subseteq v'_{k_{j}} $ for $ 1\leq j \leq d $.
	In the rest of this paper, $ t \subseteq s $ will be used to indicate that $ t \sim s_{k} \wedge s_{k} \subseteq s $ for convenience.
\end{definition}	

For example in Table \ref{table:db}, $<$[(\textit{a}:2)],[(\textit{e}:3)]$>$ and $<$[(\textit{a}:4)],[(\textit{e}:3)]$>$ are contained in $ S_{1} $, but $<$[(\textit{a}:1)],[\textit{e}:3]$>$ and $<$[(\textit{a}:4)],[(\textit{e}:4)]$>$ are not contained in $ S_{1} $. 

\begin{definition}
	A \textit{k}-itemset, also called  \textit{k}-\textit{q}-itemset
	is an itemset that contains exactly \textit{k} items. 
	A \textit{k}-sequence (\textit{k}-\textit{q}-sequence) is a sequence having \textit{k} items. 
\end{definition}	

Consider the database of Table \ref{table:db}. The $q$-sequence $ S_{1} $ is a 9-\textit{q}-sequence. Its first \textit{q}-itemset is a 2-\textit{q}-itemset.

\begin{definition}
	Let there be a sequence $t$ and a $q$-sequence $s$. 	The utility of  $ t $ in \textit{s} is defined as:
	\begin{equation}
	u(t, s) = max\{u(s_{k})|t \sim s_{k} \wedge s_{k} \subseteq s\}.
	\end{equation}
\end{definition}	

For instance, for the sequential database of Table \ref{table:db}, $ u(<[a],[b]>, S_{1}) $ = $ max$\{\textit{u}($<$[\textit{a}:2],[\textit{b}:1]$>$), \textit{u}($<$[\textit{a}:2],[\textit{b}:5]$>$), \textit{u}($<$[\textit{a}:3],[\textit{b}:5]$>$)\} = $ max $\{\$13, \$25, \$30\} = \$30. 
In this example, it can be seen that several utility values can be associated to a pattern in a same \textit{q}-sequence. This is different from traditional SPM and HUIM.

\begin{definition}
	The utility of a sequence $ t $ in a quantitative sequential database $ D $ is denoted as $ u(t) $ and defined as:
	\begin{equation}
	u(t) = \sum_{s\in D} \{u(t,s) | t \subseteq s\}.
	\end{equation}
\end{definition}

For example in Table \ref{table:db}, \textit{u}($<$[\textit{a}],[\textit{b}]$>$) = \textit{u}($<$[\textit{a}],[\textit{b}]$>$, $S_{1}$) + \textit{u}($<$[\textit{a}],[\textit{b}]$>$, $S_{2}$) + \textit{u}($<$[\textit{a}],[\textit{b}]$>$, $S_{3}$) + \textit{u}($<$[\textit{a}],[\textit{b}]$>$, $S_{4}$) + \textit{u}($<$[\textit{a}],[\textit{b}]$>$, $S_{5}$) = \$30 + \$31 + \$27 + \$37 + \$35 = \$160.

\subsection{Problem Definition}

\begin{definition}[High-Utility Sequential Pattern, HUSP]
	A sequence $ t $ in a quantitative sequential database \textit{D} is defined as a high-utility sequential pattern (denoted as \textit{HUSP}) if its total utility is no less than the minimum utility threshold $\delta$:
	\begin{equation}
	HUSP\gets\{t|u(t)\geq \delta \times u(D)\}.
	\end{equation}
\end{definition}

For example in Table \ref{table:db}, \textit{u}($ < $[\textit{a}],[\textit{b}]$ > $)(= \$160). If $ \delta = 0.1$, then $ < $[\textit{a}],[\textit{b}]$ > $ is a HUSP since \textit{u}($ < $[\textit{a}],[\textit{b}]$ > $) (= \$160) $ > $ $ \delta \times u(D) $ (= \$44.1).

\textbf{Problem Statement:} Based on the above concepts, the formal definition of the problem studied in this work  is defined below. Let there be a quantitative sequential database and a user-defined minimum utility threshold. High-utility sequential pattern mining (HUSPM) consists of enumerating all HUSPs whose total utility values in this database are no less than the minimum utility threshold. 

Therefore, the objective of high-utility sequential pattern mining is to identify sequential patterns that contain total utility in a sequence database that meets or exceeds a prespecified minimum utility threshold. These insightful profitable sequential patterns can be used in some specific applications, such as market basket analysis, E-commerce recommendation with personalized promotion, click-stream analysis, improve service quality using the searching/browsing/buying behavior, scenic route recommendation by optimizing user-specified multi-preferences (e.g., utility, safety, cost, travel distance or time).

\section{The Proposed HUSP-ULL Algorithm}
\label{sec:algorithm}

This section presents a novel algorithm named HUSP-ULL for the problem of high-utility sequential pattern mining (HUSPM). The HUSP-ULL algorithm first scans the database to find  1-sequences and build a lexicographic sequence (LS)-tree. The LS-tree is a representation of the search space used for mining HUSPs. Details of the LS-tree, utility-linked (UL)-list, pruning strategies, and the main procedure of the HUSP-ULL algorithm are respectively explained in this section.

\subsection{Concatenations and Lexicographic Sequence Tree}

Each node in a lexicographic sequence (LS)-tree \cite{ayres2002sequential} represents a candidate HUSP, whose utility can be compared with the minimum utility threshold to determine if the candidate is a HUSP. For each node that the algorithm visits in the LS-tree, a  projected database is built, which consists of utility-linked (UL)-lists obtained by transforming   transactions  ($q$-sequences) of the original database. The algorithm utilizes the UL-lists of each node (candidate HUSP) to calculate  its utility and upper-bounds. Each UL-list  represents a transaction (\textit{q}-sequence). To generate new sequences  (child nodes) of a node in the LS-tree, the designed algorithm performs two common operations \cite{pei2001prefixspan,yin2012uspan,wang2016efficiently} of sequence mining, called \textit{I}-\textit{Concatenation} and \textit{S}-\textit{Concatenation}, respectively. 

\begin{definition}[\textbf{\textit{I}-\textit{Concatenation}} and \textbf{\textit{S}-\textit{Concatenation}} \cite{pei2001prefixspan,yin2012uspan,wang2016efficiently}]
	Given a sequence $ t $ and an item $ i_{j} $, the \textit{I-Concatenation} of $ t $ with $ i_{j} $ consists of appending $ i_{j} $ to the last itemset of $ t $, denoted as $<$$t \oplus i_{j}$$>$$_{I-Concatenation}$. 
	The \textit{S-Concatenation} of $ t $ with an item $ i_{j} $ consists of adding $ i_{j} $ to a new itemset appended after the last itemset of $ t $, denoted as $<$$t \oplus i_{j}$$>$$_{S-Concatenation}$.
\end{definition}

For example, given a sequence $ t $ = $<$[$a$], [$b$]$>$ and a new item $(c)$, $<$$t \oplus c$$>$$_{I-Concatenation}$ = $<$[\textit{a}],[\textit{bc}]$>$ and $<$$t \oplus c$$>$$_{S-Concatenation}$ = $<$[\textit{a}],[\textit{b}],[\textit{c}]$>$. 
Based on the previous definitions, it follows that the number of itemsets in $ t $ does not change after performing an \textit{I}-\textit{Concatenation}, while performing an \textit{S-Concatenation} increases the number of itemsets in $ t $  by one. Based on the two operations, all candidates of the search space can be generated for the purpose of mining HUSPs.

The search process of the proposed algorithm can be viewed as the process of building a LS-tree step-by-step. The proposed algorithm initially scans the database to identify the set of 1-sequences that satisfy the minimum utility threshold. The LS-tree is then explored starting from 1-sequences by performing a depth-first search. The child nodes of a given node are obtained by performing the \textit{I}-\textit{Concatenation} or \textit{S}-\textit{Concatenation} operations on that node. The LS-tree is essentially an enumeration tree that is used to list the complete set of sequences. Each node in the tree represents a sequence. Fig. \ref{fig:tree} shows a partial LS-tree built based on the database of Table \ref{table:db}.

\begin{figure}[!htbp]
	\setlength{\abovecaptionskip}{0pt}
	\setlength{\belowcaptionskip}{0pt} 		
	\centering
	\includegraphics[width=3.6in]{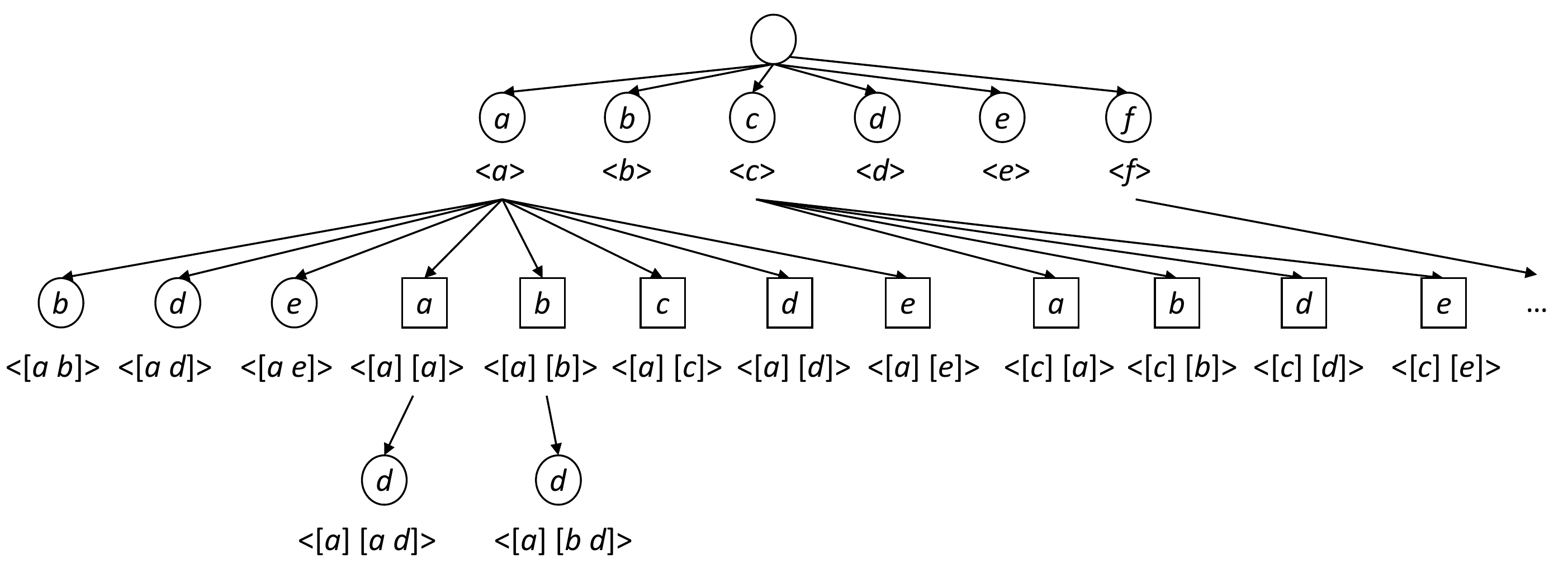}
	\caption{A lexicographic sequence (LS)-tree.}
	\label{fig:tree}
\end{figure}

In Fig. \ref{fig:tree}, 1-sequences such as $ < $\textit{a}$ > $, $ < $\textit{b}$ > $, and $ < $\textit{c}$ > $, are children of the root. Circles are used to denote patterns obtained by performing an  \textit{I}-\textit{Concatenation}, while squares denotes sequences obtained by performing an \textit{S}-\textit{Concatenation}. Notice that each LS-tree node represents a candidate of the search space of HUSPs.

To ensure the completeness and correctness for mining HUSPs, an order is defined for processing sequences. Let there be two sequences $ t_{a} $ and $ t_{b} $. It is said that $ t_{a} \prec t_{b} $ 
if 1) The length of $ t_{a} $ is less than that of $ t_{b} $; 
2) $ t_{a} $ is obtained by an \textit{I}-\textit{Concatenation} on a sequence $ t $ while $ t_{b} $ is  obtained by an \textit{S}-\textit{Concatenation} on a sequence $ t $; 
3) $ t_{a} $ and $ t_{b} $ are both obtained by respectively performing an \textit{I}-\textit{Concatenation} or \textit{S}-\textit{Concatenation} on a sequence \textit{t},  
and the  item added to $ t_{a} $ is lexicographically smaller than the one added to $ t_{b} $. 
This order on sequences is also applied to $ q $-sequences. For example, $<$[\textit{a}]$>$ $\prec$ $<$[\textit{ab}]$>$ $\prec$ $<$[\textit{a}],[\textit{a}]$>$ $\prec$ $<$[\textit{a}],[\textit{c}]$>$. To discover the complete set of HUSPs, the designed algorithm enumerates all candidates by performing the two concatenation operations following that processing order.

\subsection{The Utility-Linked List Structure}

To calculate the utility and upper-bound values of candidates, the designed algorithm could scan transactions from the original database. However, this process would result in long execution times because there are often multiple matches in a sequence.
To handle this situation, the compact utility-linked (UL)-list structure is introduced to store  information about the utility of each transaction. 
This structure is used to efficiently generate the utility of sequences obtained by \textit{I}-\textit{Concatenations} and \textit{S}-\textit{Concatenations} to continue the search for patterns. To illustrate the concept of UL-list structure, an example is provided in Table \ref{table:ul-list} using the utility-table and database given before. In Table \ref{table:ul-list}, it is the UL-list structure of the sequence $ S_{1} $ in Table \ref{table:db}.

\begin{table}[!htbp] 
	\setlength{\abovecaptionskip}{0pt}
	\setlength{\belowcaptionskip}{0pt}
	\caption{The Utility-Linked (UL)-List Structure of $ S_{1} $.}
	\label{table:ul-list}
	\centering
	\begin{tabular}{|c|c|}
		\hline
		\textbf{UP Information} & $ < $[(\textit{a}, \$10, \$84, 3) (\textit{c}, \$12, \$72, 5)], \\&[(\textit{a}, \$15, \$57, 6) (\textit{b}, \$3, \$54, 7) (\textit{c}, \$8, \$46, -)], \\&[(\textit{a}, \$20, \$26, -) (\textit{b}, \$15, \$11, -) (\textit{d}, \$8, \$3, -)], \\&[\textit{e}, \$3, \$0, -]$ > $ \\ 
		\hline
		\textbf{Header Table} &	(\textit{a}, 1) (\textit{b}, 4) (\textit{c}, 2) (\textit{d}, 8) (\textit{e}, 9)\\ \hline 
	\end{tabular}
\end{table}

The UL-list structure contains two arrays, \textbf{Header Table} and \textbf{UP (utility and position) Information}. Details are described below. 

\textbf{1) Header Table}. The Header Table in the UL-list structure stores  a set of distinct items with their first occurrence positions in the transformed transaction. For example in Table \ref{table:ul-list}, the distinct items of $ S_{1} $ are (\textit{a}), (\textit{b}), (\textit{c}), (\textit{d}), and (\textit{e}) and their first occurrence positions in $ S_{1} $ are respectively 1, 4, 2, 8 and 9. 

\textbf{2) UP Information}. In terms of information about UP (utility and position) of each sequence, each element respectively stores the \textbf{\underline{item name}}, the \textbf{\underline{utility}} \textbf{\underline{of the item}}, the \textbf{\underline{remaining utility of the item}}, and the \textbf{\underline{next position of the item}}. For example in Table \ref{table:ul-list}, the utility of the item (\textit{a}) in the first element is calculated as 10 in $ S_{1} $; the total utility excluding the item (\textit{a}) in $ S_{1} $ (called \textit{remaining utility}) is calculated to be 84, and the next position of the item (\textit{a}) in $ S_{1} $ is found to be 3. For each node in the LS-tree, transactions containing this node (sequence) are transformed into a utility-linked (UL)-list and attached to the projected database of this node. The utilities and upper-bound values of the candidates can be easily calculated from the projected database using the UL-list structure.

As mentioned, the UL-list structure can be used to calculate the utilities and the upper-bound values of candidates for deriving all HUSPs. However, a sequence may have multiple matches in a \textit{q}-sequence, and hence a sequence may have multiple utilities in a \textit{q}-sequence. Thus, it is necessary to find the positions of the matches to calculate the utilities and the upper-bound values of the processed node (sequence).

For convenience, the position of the last item within each match is defined as the \textbf{\underline{concatenation point}}, and the first concatenation point is called the \textbf{\underline{start point}}. 
For example, consider the database of Table \ref{table:db}. The sequence $ t $ = $<$[\textit{a}],[\textit{b}]$>$ has three matches in $ S_{1} $, that is $<$[\textit{a}:2],[\textit{b}:1]$>$, $<$[\textit{a}:2],[\textit{b}:5]$>$ and $<$[\textit{a}:3],[\textit{b}:5]$>$. 
The concatenation points of $ t $ in $ S_{1} $  are 4, 7 and 7, respectively, and the start point is 4. 
By definition,  an \textit{I-Concatenation} appends an item to the last itemset of a sequence. 
Thus, the candidate items for \textit{I-Concatenation} are the items appearing in the  itemsets containing concatenation points. 
In the above example, the candidate items for \textit{I-Concatenation} are \{(\textit{c}:2),(\textit{d}:4)\}. 
By definition, an \textit{S-Concatenation} adds an item to a new itemset, appended at the end of a sequence. 
Thus, in each transaction, the items in the itemsets appearing after the start point are  candidate items for \textit{S}-\textit{Concatenation}. 
In the above example, the start point (= 4) is in the second itemset. Hence, the items appearing after the second itemset are candidate items for \textit{S-Concatenation}, that is \{(\textit{a}:4),(\textit{b}:5),(\textit{d}:4),(\textit{e}:3)\}.
Since there can be multiple matches of the sequence \textit{t} in a \textit{q}-sequence, the utility of $ t $ in that \textit{q}-sequence is defined as the largest utility value of $t$ in that sequence. 

From the above example, it can be seen that the UL-list structure can speed up the process of finding candidate items and calculating the utilities and upper-bound values of sequences. The designed algorithm stores only one copy of the original database as UL-lists.
Then, for each considered sequence, a projected database is created, which records only the position of the sequence in the original database. Thus, the designed algorithm does not consume a large amount of memory.

\subsection{The Downward Closure Property of Upper Bound}

Based on the LS-tree and UL-lists, the proposed HUSP-ULL algorithm can successfully identify the complete set of HUSPs using a depth-first search that applies the two concatenations operations.
However, this process can lead to exploring a very large number of candidates in the LS-tree, since there is a combinational explosion of the number of candidates in the mining process of HUSPs. 
Since the downward closure property, also called Apriori property~\cite{agrawal1994fast}, does not hold in high-utility sequential pattern mining, a new downward closure property must be introduced to be able to reduce the search space and efficiently find all HUSPs. 
To speed up the mining process and maintain the downward closure property, a sequence-weighted utilization (SWU) \cite{yin2012uspan} upper-bound was proposed to obtain a sequence-weighted downward closure (SWDC) property for mining HUSPs. This property can be used to greatly reduce the search space and  eliminate  unpromising candidates early. 

\begin{definition}
	\label{def:swu}
	The sequence-weighted utilization (\textit{SWU}) \cite{yin2012uspan} of a sequence $ t $ in a quantitative sequential database $ D $ is denoted as $ SWU(t) $ and defined as:
	\begin{equation}
	SWU(t) = \sum_{s\in D} \{u(s) | t \subseteq s\}.
	\end{equation}
\end{definition}

For example in Table \ref{table:db}, $ SWU $($<$\textit{a}$>$) = $ u(S_{1}) $ + $ u(S_{2}) $ + $ u(S_{3}) $ + $ u(S_{4}) $ + $ u(S_{5}) $ + $ u(S_{6}) $= \$94 + \$67 + \$56 + \$67 + \$76 + \$81 (= \$441) and $ SWU $($<$\textit{f}$>$) = $ u(S_{6})$ (= \$81).

\begin{theorem}[sequence-weighted downward closure property, SWDC property \cite{yin2012uspan}] 
	\label{theorem-swu}
	Given a quantitative sequential database $ D $ and two sequences $ t $ and $ t' $. If $ t\subseteq t' $, then:
	\begin{equation}
	SWU(t')\leq SWU(t).
	\end{equation}
\end{theorem}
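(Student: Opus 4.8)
The plan is to prove the inequality directly from Definition \ref{def:swu}, by showing that the set of $q$-sequences summed over on the right-hand side (those containing $t$) is a superset of the set summed over on the left-hand side (those containing $t'$). First I would fix a quantitative sequential database $D$ and two sequences $t, t'$ with $t \subseteq t'$. Recall that by the definition of $SWU$ (Definition \ref{def:swu}),
\[
SWU(t) = \sum_{s \in D}\{u(s) \mid t \subseteq s\}, \qquad SWU(t') = \sum_{s \in D}\{u(s) \mid t' \subseteq s\}.
\]
So it suffices to establish two things: (i) for every $s \in D$, if $t' \subseteq s$ then $t \subseteq s$; and (ii) $u(s) \ge 0$ for every $s \in D$, so that adding the extra terms on the left can only increase (or keep equal) the sum.

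For step (i), the key observation is that containment of sequences is transitive. I would invoke the definition of sequence containment (Definition~8 in the excerpt): $t \subseteq t'$ means there is an increasing index map $1 \le k_1 \le k_2 \le \dots \le d'$ with $w_j \subseteq w'_{k_j}$, and likewise $t' \subseteq s$ gives another increasing index map witnessing that each itemset of $t'$ is contained in a correspondingly-ordered itemset of $s$. Composing the two index maps (and using transitivity of itemset containment $\subseteq$, which is immediate from Definition~7) yields an increasing index map witnessing $t \subseteq s$. Hence $\{s \in D \mid t' \subseteq s\} \subseteq \{s \in D \mid t \subseteq s\}$. For step (ii), nonnegativity of $u(s)$ follows from Definitions~1--3: utilities of items are products of quantities and profits, both taken to be nonnegative, so $u(i_j,v) \ge 0$, hence $u(v) \ge 0$, hence $u(s) = \sum_{v \in s} u(v) \ge 0$.

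Combining (i) and (ii): since every $q$-sequence $s$ contributing a term $u(s)$ to $SWU(t')$ also contributes the same term $u(s)$ to $SWU(t)$, and every remaining term in $SWU(t)$ is nonnegative, we conclude
\[
SWU(t') = \sum_{\substack{s \in D \\ t' \subseteq s}} u(s) \;\le\; \sum_{\substack{s \in D \\ t \subseteq s}} u(s) = SWU(t),
\]
which is the desired inequality. I do not anticipate a serious obstacle here; the only point requiring care is making the transitivity-of-containment argument precise with the (possibly many) matches a sequence can have in a $q$-sequence — one must be careful that the witnessing index maps compose correctly and that the convention $t \subseteq s \iff t \sim s_k \wedge s_k \subseteq s$ (from Definition~8) is handled consistently. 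Everything else is bookkeeping over the definitions already established.
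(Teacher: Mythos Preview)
Your proposal is correct and follows essentially the same approach as the paper's proof, which is the terse one-liner $SWU(t') = \sum_{s\in D}\{u(s)\mid t'\subseteq s\} \le \sum_{s\in D}\{u(s)\mid t\subseteq s\} = SWU(t)$. You have simply made explicit the two ingredients the paper leaves implicit (transitivity of containment and nonnegativity of $u(s)$), which is a welcome elaboration but not a different argument.
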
 

\begin{proof}
	Since $ t\subseteq t' $, $ SWU(t') = \displaystyle\sum_{s\in D} \{u(s) | t' \subseteq s\} \leq \displaystyle\sum_{s\in D} \{u(s) | t \subseteq s\} = SWU(t) $.
\end{proof}

\begin{theorem}
	\label{theorem-swu-upper-bound}
	Given a quantitative sequential database $ D $ and a sequence $ t $, it can be obtained  that:
	\begin{equation}
	u(t)\leq SWU(t).
	\end{equation}
\end{theorem}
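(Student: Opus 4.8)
The plan is to compare the two sums term by term over the $q$-sequences of $D$ that contain $t$. Both $u(t)$ and $SWU(t)$ are defined as sums $\sum_{s\in D,\, t\subseteq s}(\cdot)$; in $SWU(t)$ the summand is $u(s)$, while in $u(t)$ the summand is $u(t,s) = \max\{u(s_k)\mid t\sim s_k \wedge s_k\subseteq s\}$. So it suffices to show that for every $s\in D$ with $t\subseteq s$ we have $u(t,s)\leq u(s)$; summing this inequality over all such $s$ immediately yields $u(t)\leq SWU(t)$.

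To prove $u(t,s)\leq u(s)$, first observe that $u(t,s)$ is a maximum taken over a finite nonempty set (nonempty because $t\subseteq s$), so it equals $u(s_k^\ast)$ for some $q$-subsequence $s_k^\ast\subseteq s$ with $t\sim s_k^\ast$. The key step is then to establish the monotonicity of utility under $q$-subsequence containment: if $s_k^\ast\subseteq s$ then $u(s_k^\ast)\leq u(s)$. By Definition of containment for $q$-sequences, writing $s_k^\ast = \langle v_1,\dots,v_d\rangle$ and $s = \langle v'_1,\dots,v'_{d'}\rangle$, there is a strictly increasing index map $k_1<k_2<\dots<k_d$ with $v_j\subseteq v'_{k_j}$ for each $j$. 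Because every item utility $u(i_j,v)=q(i_j,v)\times pr(i_j)$ is nonnegative and $u(v)=\sum_{i_j\in v}u(i_j,v)$, containment of $q$-itemsets $v_j\subseteq v'_{k_j}$ gives $u(v_j)\leq u(v'_{k_j})$ (the items of $v_j$ appear with the same quantities in $v'_{k_j}$, and any extra items only add nonnegative utility). Since the indices $k_1,\dots,k_d$ are distinct and the $q$-itemsets of $s$ not hit by the map contribute nonnegative utility, summing gives
\begin{equation}
u(s_k^\ast) = \sum_{j=1}^{d}u(v_j) \leq \sum_{j=1}^{d}u(v'_{k_j}) \leq \sum_{v'\in s}u(v') = u(s).
\end{equation}

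Combining the two observations, $u(t,s)=u(s_k^\ast)\leq u(s)$ for each $s\in D$ with $t\subseteq s$, and therefore
\begin{equation}
u(t) = \sum_{s\in D}\{u(t,s)\mid t\subseteq s\} \leq \sum_{s\in D}\{u(s)\mid t\subseteq s\} = SWU(t).
\end{equation}
The only non-routine point is the monotonicity lemma $s_k^\ast\subseteq s \Rightarrow u(s_k^\ast)\leq u(s)$, and even that is straightforward once one invokes the nonnegativity of quantities and profits; no use of the (failing) Apriori property is needed here, since we are only bounding the utility of a single sequence against the $SWU$ of that same sequence, not relating a sequence to its sub-patterns. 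I would present the lemma explicitly (perhaps as a one-line remark) and then give the two displayed chains above as the body of the proof.
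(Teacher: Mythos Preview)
Your proof is correct and follows exactly the same approach as the paper: establish $u(t,s)\leq u(s)$ for each $s\in D$ with $t\subseteq s$, then sum over such $s$. The only difference is that the paper simply asserts $u(t,s)\leq u(s)$ as evident, whereas you spell out the monotonicity argument via nonnegativity of item utilities and the containment structure; your added detail is sound but not a different route.
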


\begin{proof}
	Since $u(t, s)\leq u(s) $, we can obtain that $ u(t) = \displaystyle\sum_{s\in D} \{u(t,s) | t \subseteq s\} \leq \displaystyle\sum_{s\in D} \{u(s) | t \subseteq s\} = SWU(t)$.
\end{proof}

The SWDC property and Theorem \ref{theorem-swu-upper-bound}  ensure that if the $ SWU $ of a sequence $ t $ is less than the minimum utility threshold,  the utility of $ t $ is also less than that threshold. 
Moreover, if that condition holds, the utilities of all the super-sequences of $ t $ are also less than that threshold. 
Thus, numerous unpromising candidates can be pruned using the \textit{SWU}. However, the $ SWU $ of a sequence \textit{t} is usually much larger than the actual utilities of \textit{t} and its super-sequences. 

To improve the performance of the designed algorithm, the remaining utility model \cite{yin2012uspan} is proposed in the USpan algorithm \cite{yin2012uspan}. However, it is not a real upper bound and cannot provide the complete mining results of utility mining, as reported in \cite{gan2019proum}. Thus, the concept of sequence extension utility (\textit{SEU}) \cite{gan2019proum} was proposed in the projection-based ProUM algorithm, and the details can be referred to \cite{gan2019proum}. To explain the concept of remaining utility and sequence extension utility, several concepts related to sequences and \textit{q}-sequences must be introduced.

\begin{definition}
	Given two $ q $-sequences $ s $ and $ s' $, if $ s\subseteq s' $, the extension of $ s $ in $ s' $ is said to be the rest of $ s' $ after $ s $, and is denoted as $<$$ s' $-$ s $$>_{rest}$. 
	Given a sequence $ t $ and a $ q $-sequence $ s $, if $ t \sim s_{k} \wedge s_{k} \subseteq s $ $ (t \subseteq s) $, the extension of $ t $ in $ s $ is the rest of $ s $ after $ s_{k} $, which is denoted as $<$$s$-$t$$>_{rest}$, where $ s_{k} $ is the first match of $ t $ in $ s $. 
\end{definition}

For example, given two \textit{q}-sequences $ s $ = $<$[\textit{a}:2],[\textit{b}:5]$>$ and $ S_{1} $ in Table \ref{table:db}, the extension of $ s $ in $ S_{1} $ is $<$$S_{1}$ - $s$$>$$_{rest}$ = $<$[(\textit{d}:4)],[(\textit{e}:3)]$>$. Consider a sequence $ t $ = $<$[\textit{a}],[\textit{b}]$>$. There exist three matches of $ t $ in $ S_{1} $. The first one is $<$[\textit{a}:2],[\textit{b}:1]$>$. Thus, $<$$S_{1}$ - $t$$>$$_{rest}$ = $<$[(\textit{c}:2)],[(\textit{a}:4)(\textit{b}:5)(\textit{d}:4)],[(\textit{e}:3)]$>$.

\begin{definition}
	The set of  extension items of a sequence $ t $ in a quantitative sequential database \textit{D} is denoted as $ I(t)_{rest} $ and defined as:
	\begin{equation}
	I(t)_{rest} = \{i_{j}|i_{j}\in <s - t>_{rest}\wedge t \subseteq s \wedge s\in D\}.
	\end{equation}
\end{definition}

In the above example, \textit{I}($<$[\textit{a}],[\textit{b}]$>$)$_{rest}$ = \{$ a, b, c, d, e $\}.

\begin{definition}
	The sequence extension utility (\textit{SEU}) \cite{gan2019proum} of a sequence $ t $ in a quantitative sequential database $ D $ is denoted as  \textit{SEU}$(t) $ and defined as:
	\begin{equation}
	SEU(t) = \sum_{s\in D} \{u(t,s) + u(<s - t>_{rest})|t \subseteq s\}.
	\end{equation}
\end{definition}

Notice that \textit{u}($<$\textit{s} - \textit{t}$>_{rest}$) is the remaining utility of $ t $ in $ s $, which is the fourth element in the designed UL-list. For example in Table \ref{table:db}, consider the sequence $ t $= $<$[\textit{a}],[\textit{b}]$>$. Then, \textit{SEU}$(t) $ = $ u(t, S_{1})$ + \textit{u}($<$$S_{1}$ - $t$$>$$_{rest}$) + $ u(t, S_{2})$ + \textit{u}($<$$S_{2}$ - $t$$>$$_{rest}$) + $ u(t, S_{3})$ + \textit{u}($<$$S_{3}$ - $t$$>$$_{rest}$) + $ u(t, S_{4})$ + \textit{u}($<$$S_{4}$ - $t$$>$$_{rest}$) + $ u(t, S_{5})$ + \textit{u}($<$$S_{5}$ - $t$$>$$_{rest}$) = \$30 + \$54 + \$31 + \$25 + \$27 + \$10 + \$37 + \$11 + \$35 + \$19 = \$279.

\begin{theorem}
	\label{theorem-pu}
	Given a quantitative sequential database $ D $ and two sequences $ t $ and $ t' $. If $ t\subseteq t' $, we can obtain that: 
	\begin{equation}
	SEU(t')\leq SEU(t).
	\end{equation}		
\end{theorem}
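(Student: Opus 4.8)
The plan is to reuse the two-ingredient structure of the proof of Theorem~\ref{theorem-swu}, but with one extra term-by-term comparison, since unlike $SWU$ (whose summand is just $u(s)$) the summand $u(t,s)+u(<s-t>_{rest})$ of $SEU$ depends on the sequence itself. First I would record the easy part: by transitivity of containment, $t\subseteq t'$ together with $t'\subseteq s$ forces $t\subseteq s$, so $\{s\in D\mid t'\subseteq s\}\subseteq\{s\in D\mid t\subseteq s\}$; hence $SEU(t')$ is a sum over a sub-collection of the $q$-sequences that occur in $SEU(t)$. Because every quantity and every profit is positive, every utility value is non-negative, so each summand $u(t,s)+u(<s-t>_{rest})$ is non-negative. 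Consequently it suffices to prove the pointwise inequality
\begin{equation}
u(t',s)+u(<s-t'>_{rest})\ \leq\ u(t,s)+u(<s-t>_{rest})
\end{equation}
for every $q$-sequence $s$ with $t'\subseteq s$: summing it over all such $s$ and then appending the remaining non-negative terms $\{u(t,s)+u(<s-t>_{rest})\mid t\subseteq s,\ t'\not\subseteq s\}$ to the right-hand side gives $SEU(t')\leq SEU(t)$.

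The core step is the pointwise inequality, which I would establish by a match-tracking argument. Fix $s$ with $t'\subseteq s$ and let $M'$ be a match of $t'$ in $s$ attaining $u(t',s)$. Since $t\subseteq t'$, deleting from $M'$ the items that belong to the "extra part'' of $t'$ leaves a match $M$ of $t$ in $s$, so $u(M')=u(M)+u(E)$ where $E$ collects the deleted items. In the canonical I-/S-concatenation ordering used to build the LS-tree, those deleted items all occur at positions of $s$ strictly after the concatenation point of $M$ (an I-concatenation only appends an item lexicographically larger than the ones already present in the last itemset, and each $q$-itemset of $s$ is alphabetically ordered; an S-concatenation starts a brand-new itemset), so $u(E)$ is bounded by the utility of the portion of $s$ following the concatenation point of $M$. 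Moreover every match of $t'$ induces a match of $t$ whose concatenation point is no later, so the start point of $t$ in $s$ is at most the start point of $t'$ in $s$, which already yields $u(<s-t'>_{rest})\leq u(<s-t>_{rest})$ and, with a bit more care, leaves enough "room'' in $u(<s-t>_{rest})$ to absorb $u(E)$. Combining $u(t',s)=u(M)+u(E)\leq u(t,s)+u(E)$ with these suffix estimates gives the claimed pointwise bound.

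The main obstacle is exactly the bookkeeping in that second step: one must verify that the utility $u(E)$ gained by extending $t$ to $t'$ inside a single $q$-sequence is genuinely offset by the decrease in the remaining-utility term, and this is delicate because $u(t,s)$ is a \emph{maximum} over matches while $u(<s-t>_{rest})$ is measured from the \emph{first} match. I would therefore isolate a short lemma that makes explicit the position relationships among (i) the match $M$ of $t$ induced by $M'$, (ii) the match $M'$ of $t'$, (iii) the start points of $t$ and of $t'$ in $s$, and (iv) the placement of the items of $E$, and only then assemble the final estimate. Everything else — the index-set shrinkage, the non-negativity of summands, and the additivity $u(M')=u(M)+u(E)$ — is routine.
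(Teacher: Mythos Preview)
Your plan has a genuine gap, and it sits exactly where you yourself placed the ``main obstacle'': the pointwise inequality
\[
u(t',s)+u(\langle s-t'\rangle_{rest})\ \leq\ u(t,s)+u(\langle s-t\rangle_{rest})
\]
is \emph{false} under the definitions used in this paper, so no bookkeeping lemma of the kind you outline can exist. Take all unit profits equal to~$1$, the single $q$-sequence $s=\langle[(a{:}1)(b{:}1)],\,[(a{:}1)(b{:}100)]\rangle$, and $t=\langle[a]\rangle\subseteq t'=\langle[ab]\rangle$ (note $t'$ is the $I$-Concatenation of $t$ with $b$, so the prefix interpretation applies). Then $u(t,s)=\max\{1,1\}=1$, the first match of $t$ ends at position~$1$, and $u(\langle s-t\rangle_{rest})=1+1+100=102$, so the summand for $t$ equals $103$. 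For $t'$ one has $u(t',s)=\max\{2,101\}=101$, the first match of $t'$ ends at position~$2$, and $u(\langle s-t'\rangle_{rest})=1+100=101$, giving a summand of $202$. With $D=\{s\}$ this yields $SEU(t')=202>103=SEU(t)$, so not only your pointwise bound but the theorem itself fails on this database.

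The failure mechanism is precisely the mismatch you diagnosed but then waved past: $u(\cdot,s)$ is a \emph{maximum} over all matches (Definition~9) while $\langle s-\cdot\rangle_{rest}$ is measured from the \emph{first} match (Definition~14). In the counterexample the max-utility match $M'$ of $t'$ lives entirely in the second itemset, so its extension item $(b{:}100)$ sits at a position strictly after $p_{t'}$; its utility is counted once inside $u(t',s)$ and again inside $u(\langle s-t'\rangle_{rest})$, and the gap $u(\langle s-t\rangle_{rest})-u(\langle s-t'\rangle_{rest})=1$ cannot possibly absorb it. The paper does not give its own argument here but defers to~\cite{gan2019proum}; with $SEU$ exactly as written in Definition~16 the statement is not salvageable (the same example also breaks $SEU(t')\leq SWU(t')$ in Theorem~\ref{three-upper-bounds}, since $SWU(t')=u(s)=103$). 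Any correct proof must rest on a different reading of $SEU$---for instance, replacing $u(t,s)$ by the utility of the \emph{first} match, in which case the two terms together are bounded by $u(s)$ and the monotonicity reduces to the $SWU$ argument of Theorem~\ref{theorem-swu}.
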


\begin{theorem}
	\label{theorem-pu-upper-bound}
	Given a quantitative sequential database $ D $ and a sequence $ t $, it follows that:
	\begin{equation}
	u(t)\leq SEU(t).
	\end{equation}	
\end{theorem}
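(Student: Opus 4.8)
The plan is to prove this by a direct, term-by-term comparison of the two summations over the database, mirroring the style of the proofs of Theorems \ref{theorem-swu-upper-bound} and \ref{theorem-pu}. Recall that $u(t) = \sum_{s\in D} \{u(t,s) \mid t \subseteq s\}$ and $SEU(t) = \sum_{s\in D} \{u(t,s) + u(\langle s - t\rangle_{rest}) \mid t \subseteq s\}$. Both sums range over exactly the same set of $q$-sequences $s\in D$ (those with $t\subseteq s$), so it suffices to show that for each such $s$ the summand on the left is no larger than the summand on the right, i.e. $u(t,s) \leq u(t,s) + u(\langle s - t\rangle_{rest})$.

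The key step is therefore the single inequality $u(\langle s - t\rangle_{rest}) \geq 0$. This follows because $\langle s - t\rangle_{rest}$ is (by Definition of the extension of $t$ in $s$) a $q$-sequence built from items of $s$ with their associated quantities, and the utility of any $q$-sequence is a sum of terms $u(i_j, v) = q(i_j, v)\times pr(i_j)$, each of which is nonnegative since quantities and external profits are nonnegative. In the degenerate case where the first match $s_k$ of $t$ already exhausts $s$, the extension is empty and its utility is $0$, so the bound still holds. Summing the per-sequence inequality over all $s\in D$ with $t\subseteq s$ yields $u(t) \leq SEU(t)$, completing the argument.

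I do not expect any real obstacle here: the result is an immediate consequence of the nonnegativity of utilities together with the fact that $SEU$ adds the (nonnegative) remaining-utility term to each contribution that already appears in $u(t)$. The only point requiring a word of care is making explicit that the two summations are taken over the identical index set $\{s \in D \mid t \subseteq s\}$, so that the comparison can genuinely be done summand by summand rather than merely bounding one sum by a larger collection of terms. Once that is noted, the proof is a one-line calculation of the same flavour as the preceding theorems.
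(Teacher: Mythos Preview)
Your argument is correct. The two sums indeed range over the same set $\{s\in D \mid t\subseteq s\}$, and the inequality $u(t,s)\leq u(t,s)+u(\langle s-t\rangle_{rest})$ follows immediately from the nonnegativity of the remaining utility, so summing gives the claim.

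As for comparison with the paper: the paper does not actually supply its own proof of this theorem; it simply refers the reader to \cite{gan2019proum} for the proofs of both Theorems~\ref{theorem-pu} and~\ref{theorem-pu-upper-bound}. Your one-line argument, in the style of the proof of Theorem~\ref{theorem-swu-upper-bound}, is exactly what one would expect and is entirely adequate here.
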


Proof of Theorems \ref{theorem-pu} and \ref{theorem-pu-upper-bound} can be referred to \cite{gan2019proum}. They indicate that for a sequence $ t $, if \textit{SEU}$(t) $ is less than the minimum utility threshold and the utility of $ t $ is less than that threshold, the utilities of the super-sequences of $ t $ are less than that threshold.
If the \textit{SEU} or $ SWU $ of $ t $ is less than the threshold, the utility of $ t $ and the utilities of the super-sequences of $ t $ are less than the threshold, which indicates that $ t $ and the super-sequences of $ t $ are not HUSPs.
However,  the algorithm may still explore a large search space since the $ SWU $ and \textit{SEU}  upper-bounds are overestimations of utility values of patterns. To improve the mining performance and reduce the search space by pruning a large number of candidates, we introduce a tighter upper-bound for mining HUSPs, which is based on the PEU model~\cite{wang2016efficiently}. Details are given next.

\begin{definition}
	The prefix extension utility of a sequence $ t $ in a \textit{q}-sequence $ s $ is denoted as $ PEU(t,s) $ and defined as:
	\begin{equation}
	PEU(t, s) = max\{u(s_{k})+u(\textnormal{$ < $}s \textnormal{$ - $} s_{k}\textnormal{$ > $}_{rest})|t \sim s_{k} \wedge s_{k} \subseteq s\}.
	\end{equation}	
\end{definition}

For example, consider Table \ref{table:db} and a sequence \textit{t} = $<$[\textit{a}],[\textit{b}]$>$. This sequence has 3 matches in $ S_{2} $, which are $<$[\textit{a}:1],[\textit{b}:3]$>$, $<$[\textit{a}:1],[\textit{b}:2]$>$ and $<$[\textit{a}:5],[\textit{b}:2]$>$. Thus, we can obtain that $u$($<$$S_{2}$ - $<$[\textit{a}:1],[\textit{b}:3]$>>_{rest}$) = $u$($<$[(\textit{d}:2)],[(\textit{b}:2)(\textit{c}:1)(\textit{d}:4)(\textit{e}:3)]$>$) = \$25, $u$($<$$S_{2}$ - $<$[\textit{a}:1],[\textit{b}:2]$>>_{rest})$ = $u$($<$[(\textit{c}:1)(\textit{d}:4)(\textit{e}:3)]$>$) = \$15 and $u$($<$$S_{2}$ - $<$[\textit{a}:5],[\textit{b}:2]$>>_{rest}$) = $u$($<$[(\textit{c}:1)(\textit{d}:4)(\textit{e}:3)]$>$) = \$15. The utilities of the three matches are \$14, \$11 and \$31, respectively. Thus, $ PEU $($<$[\textit{a}],[\textit{b}]$>$,$S_{2}$) = \textit{max}\{\$14 + \$25, \$11 + \$15, \$31 + \$15\} = \$46.

\begin{definition}
	The prefix extension utility of a sequence $ t $ in $ D $ is denoted as $ PEU(t) $ and defined as:	
	\begin{equation}
	PEU(t) = \sum_{s\in D}\{PEU(t, s)|t\subseteq s\}.    
	\end{equation}
\end{definition}

For example, consider Table \ref{table:db} and the sequence \textit{t} = $<$[\textit{a}],[\textit{b}]$>$. The value $ PEU(t) $ is calculated as (\$67 + \$46 + \$37 + \$48 + \$54) = \$252, which is smaller than \textit{SEU}(\textit{t}) (= \$279).

\begin{theorem}
	\label{theorem:meu}
	Given a quantitative sequential database $ D $, and two sequences \textit{t} and \textit{t'}. If \textit{t}$ \subseteq $ \textit{t'}, we obtain that:
	\begin{equation}
	PEU(t') \leq PEU(t).	
	\end{equation}
\end{theorem}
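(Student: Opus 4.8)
The plan is to push the inequality down to the level of a single $q$-sequence and then to compare an optimal match of $t'$ with the match of $t$ it induces. First I would note that containment composes across the two ``levels'' involved in $PEU$: if $t\subseteq t'$ and $t'\subseteq s$ for some $s\in D$, then a match of $t'$ in $s$ already contains a match of $t$, so $t\subseteq s$ as well; hence $\{s\in D\mid t'\subseteq s\}\subseteq\{s\in D\mid t\subseteq s\}$. Since each $PEU(t,s)$ is a maximum of sums of non-negative item utilities, every term is $\geq 0$, so it suffices to establish the local bound $PEU(t',s)\leq PEU(t,s)$ for every $s$ with $t'\subseteq s$; summing over that set and then enlarging the index set to all $s$ with $t\subseteq s$ yields $PEU(t')\leq PEU(t)$.

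For the local bound, fix $s$ with $t'\subseteq s$ and let $s_{k'}$ be a match of $t'$ in $s$ attaining $PEU(t',s)$, with concatenation point $p'$ (the linear position of its last item). In the LS-tree, $t'$ is reached from $t$ by a chain of $I$-\textit{Concatenations} and $S$-\textit{Concatenations}, so $t$ is a prefix of $t'$: the leading itemsets of $t'$ coincide with those of $t$ except that $t$'s last itemset is contained in the corresponding itemset of $t'$, and $t'$ carries possibly some extra trailing itemsets. Restricting the embedding that realizes $s_{k'}$ to the items realizing $t$ therefore gives a genuine match $s_k$ of $t$ in $s$ whose concatenation point $p$ satisfies $p\leq p'$.

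The crux is to show that every item of $s$ used by $s_{k'}$ but not by $s_k$ lies at a position strictly larger than $p$ (and at most $p'$). This rests on three order facts: items introduced by an $I$-\textit{Concatenation} are lexicographically larger than all items already in that itemset, so the extra items inside the shared last itemset come after the last item of $t$'s last itemset; items introduced by an $S$-\textit{Concatenation} sit in itemsets appended after $t$'s last itemset, hence after $p$; and the linear order on the items of $s$ orders itemsets first, then items alphabetically within an itemset. Granting this, the items of $s_{k'}\setminus s_k$ and the items counted by $u(<s-s_{k'}>_{rest})$ (those strictly after $p'$) are disjoint subsets of the items strictly after $p$, that is, of $<s-s_k>_{rest}$; so by additivity and non-negativity of $u$,
\begin{equation}
PEU(t',s)=u(s_{k'})+u(<s-s_{k'}>_{rest})=u(s_k)+u(s_{k'}\setminus s_k)+u(<s-s_{k'}>_{rest})\leq u(s_k)+u(<s-s_k>_{rest})\leq PEU(t,s),
\end{equation}
where the last step holds because $s_k$ is one particular match of $t$ in $s$ while $PEU(t,s)$ is the maximum over all such matches. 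Combined with the reduction of the first paragraph, this gives $PEU(t')\leq PEU(t)$.

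The only non-routine part is this absorption step: making precise that each extra item of the $t'$-match lies beyond the concatenation point of the induced $t$-match. It is exactly here that one must use that $t'$ is built from $t$ by appending items rather than merely $t\subseteq t'$ in the general subsequence sense; for an arbitrary containment an extra item could fall before $p$, in which case its utility is not recovered by the remaining-utility term of $t$ and the step would break. So the property is to be read for the forward extensions that the LS-tree actually generates, which is precisely the setting in which the pruning strategy invokes it.
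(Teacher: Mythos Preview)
Your proof is correct and follows essentially the same approach as the paper: reduce to a single $q$-sequence $s$, take an optimal match $s_{q'}$ of $t'$ in $s$, split it into a $t$-matching prefix $s_{q'_t}$ plus an extension $s_{q'_e}$, absorb $u(s_{q'_e})+u(<\!s-s_{q'}\!>_{rest})$ into $u(<\!s-s_{q'_t}\!>_{rest})$, bound by the maximum over all $t$-matches, and then sum. Your explicit observation that the absorption step requires $t$ to be a \emph{prefix} of $t'$ (as produced by the LS-tree concatenations) rather than an arbitrary subsequence is a valid refinement that the paper's proof assumes without comment when it writes ``divide $t'$ into two parts as the prefix $t$ and the extension $e$.''
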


\begin{proof}
	Suppose that $ s $ is a transaction in $ D $, which contains \textit{t} and \textit{t'}. 
	Let $ s_{q} $ be a \textit{q}-sequence satisfying \{$ u(s_{q} $) + \textit{u}($ < $\textit{s} - $ s_{q} $$ > $$ _{rest} $)\} = $ PEU(t, s) $, where $ t \sim s_{q} \wedge s_{q} \subseteq s $. Let $ s_{q'} $ be a \textit{q}-sequence satisfying \{$ u $($ s_{q'} $) + $ u $($ < $\textit{s} - $ s_{q'}>_{rest} $)\} = $ PEU(t', s) $ where $ t' \sim s_{q'} \wedge s_{q'} \subseteq s $. Since $ t\subseteq t' $,  we can divide $ t' $ into two parts as the prefix $ t $ and the extension $ e $ such that $ t $ + $ e $  = $ t' $. Similarly,  $ s_{q'} $ can  be divided into two parts as the prefix $ s_{q'_{t}} $ matching $ t $ and the extension $ s_{q'_{e}} $ matching $ e $ such that $ s_{q'_{t}} $ + $ s_{q'_{e}} $ = $ s_{q'} $. Thus, 
	\begin{align*}
	PEU(t', s) &= \{u(s_{q'}) + u(<s - s_{q'}>_{rest})\}\\
	&= \{u(s_{q'_{t}}) + u(s_{q'_{e}}) + u(<s - s_{q'}>_{rest})\}\\
	&\leq \{u(s_{q'_{t}}) + u(<s - s_{q'_{t}}>_{rest})\}\\
	&\leq \{u(s_{q}) + u(<s - s_{q}>_{rest})\} =  PEU(t, s).
	\end{align*}
	We can thus obtain that $\displaystyle PEU(t') = \sum_{s\in D}\{PEU(t', s)|t'\subseteq s\}\leq \sum_{s\in D}\{PEU(t, s)|t'\subseteq s\} \leq \sum_{s\in D}\{PEU(t, s)|t\subseteq s\} = PEU(t)$. 
\end{proof}

Theorem \ref{theorem:meu} indicates that if the $PEU$ value of a sequence $ t $ is less than the minimum utility threshold, the $PEU$ values of the super-sequences of $ t $ are also less than the minimum utility threshold.

\begin{theorem}
	\label{theorem:upper-bound-meu}
	Given a quantitative sequential database $ D $ and a sequence $ t $, we can obtain that 
	\begin{equation}
	u(t)\leq PEU(t)
	\end{equation}
\end{theorem}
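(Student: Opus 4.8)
The plan is to reduce the claim to the pointwise (per-transaction) inequality $u(t,s) \leq PEU(t,s)$ and then sum over the database. First I would fix an arbitrary $q$-sequence $s \in D$ with $t \subseteq s$, and recall that by definition $u(t,s) = \max\{u(s_k) \mid t \sim s_k \wedge s_k \subseteq s\}$ and $PEU(t,s) = \max\{u(s_k) + u(\textnormal{$<$}s \textnormal{$-$} s_k\textnormal{$>$}_{rest}) \mid t \sim s_k \wedge s_k \subseteq s\}$, where both maxima range over \emph{the same} index set of matches $s_k$ of $t$ in $s$.

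The key step is the observation that every quantitative value $q(i_j,v)$ and every profit $pr(i_j)$ is positive, so the utility of any $q$-itemset, and hence of any $q$-sequence including the extension $\textnormal{$<$}s \textnormal{$-$} s_k\textnormal{$>$}_{rest}$, is non-negative. Consequently, for each individual match $s_k$ we have $u(s_k) \leq u(s_k) + u(\textnormal{$<$}s \textnormal{$-$} s_k\textnormal{$>$}_{rest})$. Taking the maximum of both sides over all matches $s_k$ (legitimate since the larger quantity dominates term by term on a common index set), I obtain $u(t,s) \leq PEU(t,s)$ for every $s$ with $t \subseteq s$.

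Finally I would sum this pointwise inequality over all transactions containing $t$:
\begin{align*}
u(t) = \sum_{s\in D}\{u(t,s) \mid t \subseteq s\} \leq \sum_{s\in D}\{PEU(t,s) \mid t \subseteq s\} = PEU(t),
\end{align*}
which is the desired bound. There is essentially no hard obstacle here; the only point requiring care is making explicit that the two maxima in $u(t,s)$ and $PEU(t,s)$ are taken over the identical set of matches, so that non-negativity of the remaining utility transfers the inequality through the $\max$ operator. One could alternatively phrase it by picking the match $s_k^{*}$ attaining $u(t,s)$ and noting $PEU(t,s) \geq u(s_k^{*}) + u(\textnormal{$<$}s \textnormal{$-$} s_k^{*}\textnormal{$>$}_{rest}) \geq u(s_k^{*}) = u(t,s)$, which avoids any manipulation of maxima altogether.
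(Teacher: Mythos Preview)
Your proposal is correct and follows essentially the same approach as the paper: establish the per-transaction inequality $u(t,s)\leq PEU(t,s)$ by comparing the two maxima term-by-term (using non-negativity of the remaining utility), then sum over all $s\in D$ containing $t$. Your write-up is in fact more explicit than the paper's, which simply asserts $\max\{u(s_k)\}\leq \max\{u(s_k)+u(\langle s-s_k\rangle_{rest})\}$ without spelling out the non-negativity justification.
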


\begin{proof}
	Since $ u(t, s) $ = $ max\{u(s_{k})|t \sim s_{k} \wedge s_{k} \subseteq s\} $ $\leq  max\{u(s_{k})+u($$<$$s - s_{k}$$>$$_{rest})|t \sim s_{k} \wedge s_{k} \subseteq s\}$ = $PEU(t, s)$. Thus, $ u(t) $ = $\displaystyle \sum_{s\in D}u(t, s)\leq \sum_{s\in D}\{PEU(t, s)|t\subseteq s\}$ = $PEU(t)$.
\end{proof}

Theorems \ref{theorem:meu} and \ref{theorem:upper-bound-meu} ensure that the complete set of HUSPs can be discovered. If the $PEU$ of a sequence $ t $ is less than the minimum utility threshold, then the utility of $ t $ is less than the minimum utility threshold, and the utilities of the super-sequences of $ t $ are also less than the minimum utility threshold.

\begin{theorem}
	\label{three-upper-bounds}
	For any quantitative sequential database \textit{D} and a sequence \textit{t}, the following relationship holds: 
	\begin{equation}
		PEU(t)\leq SEU(t)\leq SWU(t)
	\end{equation}
\end{theorem}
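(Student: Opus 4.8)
The plan is to establish the two inequalities $PEU(t)\le SEU(t)$ and $SEU(t)\le SWU(t)$ separately, and in each case to reduce the claim to a per-transaction statement, since all three quantities are sums over exactly the same index set $\{\,s\in D \mid t\subseteq s\,\}$. Concretely, I would fix an arbitrary $q$-sequence $s\in D$ with $t\subseteq s$ and prove
\[
PEU(t,s)\ \le\ u(t,s)+u(<s-t>_{rest})\ \le\ u(s),
\]
in which the middle expression is precisely the contribution of $s$ to $SEU(t)$. Adding these chains over all such $s$ then yields the theorem, and along the way it re-confirms consistency with $u(t)\le PEU(t)$ (Theorem~\ref{theorem:upper-bound-meu}) and $u(t)\le SWU(t)$ (Theorem~\ref{theorem-swu-upper-bound}).

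For the left inequality, let $s_{k_{0}}$ be the first match of $t$ in $s$, so that $<s-t>_{rest}=<s-s_{k_{0}}>_{rest}$ by definition, and let $p_{0}$ denote its concatenation point. Pick any match $s_{k}$ of $t$ in $s$, with concatenation point $p_{k}$. Because $s_{k_{0}}$ is the earliest-ending match we have $p_{k}\ge p_{0}$, hence $<s-s_{k}>_{rest}$ is a suffix of $<s-s_{k_{0}}>_{rest}$ and therefore $u(<s-s_{k}>_{rest})\le u(<s-t>_{rest})$; moreover $u(s_{k})\le u(t,s)$ by the definition of $u(t,s)$ as a maximum over matches. Summing these two bounds and then taking the maximum over all matches $s_{k}$ gives $PEU(t,s)\le u(t,s)+u(<s-t>_{rest})$, and summing over every $s\in D$ with $t\subseteq s$ gives $PEU(t)\le SEU(t)$.

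For the right inequality, the idea is that the items of $s$ realizing $u(t,s)$ and the items whose utilities are aggregated in $u(<s-t>_{rest})$ lie in disjoint positions of $s$, so their combined utility cannot exceed $u(s)=\sum_{v\in s}u(v)$; summing over all $s$ with $t\subseteq s$ then gives $SEU(t)\le SWU(t)$ and closes the chain.

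The delicate point, and the step I expect to require the most care, is this second inequality. Since a sequence can match a $q$-sequence in several places, $u(t,s)$ is a maximum taken over all of those matches, whereas $<s-t>_{rest}$ is pinned to the first match alone; the argument must therefore guarantee that no item utility of $s$ is charged on both sides, and must also correctly handle items that happen to lie in the same $q$-itemset as the concatenation point. By contrast, the left inequality and the final summation over $D$ are routine once one notes that the three sums run over the identical set of transactions.
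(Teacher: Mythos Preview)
Your overall strategy---proving the chain $PEU(t,s)\le u(t,s)+u(<\!s-t\!>_{rest})\le u(s)$ per transaction and then summing---is exactly the paper's approach, and your argument for the left inequality matches the paper's (bound $u(s_k)$ by $u(t,s)$ and $u(<\!s-s_k\!>_{rest})$ by $u(<\!s-t\!>_{rest})$, then take the maximum over matches).

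The concern you flag about the right inequality is not merely a point requiring care: the disjointness claim you want to invoke is \emph{false}, and so is the per-transaction inequality $u(t,s)+u(<\!s-t\!>_{rest})\le u(s)$. The paper asserts this step without justification, and under the definitions given it fails. Concretely, take the single-transaction database $s=<[(a{:}1)],[(a{:}100)]>$ with $pr(a)=1$ and $t=<[a]>$. The first match is $<[(a{:}1)]>$, so $<\!s-t\!>_{rest}=<[(a{:}100)]>$ with utility $100$; the maximum-utility match is $<[(a{:}100)]>$, giving $u(t,s)=100$. Then $u(t,s)+u(<\!s-t\!>_{rest})=200>101=u(s)$, hence $SEU(t)=200>101=SWU(t)$. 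The positions realizing $u(t,s)$ can lie entirely inside $<\!s-t\!>_{rest}$, which is exactly the double-charging you were worried about. What \emph{does} hold per transaction is $PEU(t,s)\le u(s)$, because for any fixed match $s_k$ the items of $s_k$ and those of $<\!s-s_k\!>_{rest}$ are genuinely position-disjoint; so $PEU(t)\le SWU(t)$ can be established directly. But routing the chain through $SEU$ in the middle, as both you and the paper do, does not work with the stated definition of $SEU$.
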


\begin{proof}
	Since $ u(s_{k}) \leq max\{u(s_{k}) | t \sim s_{k} \wedge s_{k} \subseteq s\}$ = $u(t, s) $ and $ u($$<$$s - s_{k}$$>$$_{rest}) \leq u($$<$$s - t$$>$$_{rest}) $, $ PEU(t ,s)$ = $max\{u(s_{k}) + u($$<$$s - s_{k}$$>$$_{rest}) | t \sim s_{k} \wedge s_{k} \subseteq s\} \leq u(t,s) + u($$<$$s - t$$>$$_{rest}) \leq u(s) $. Thus $\displaystyle PEU(t) $ = $\sum_{s\in D} \{PEU(t,s) | t \subseteq s\} \leq \sum_{s\in D} \{u(t,s) + u($$<$$s - t$$>$$_{rest}) | t \subseteq s\}$ = $SEU(t) $ $	\leq \sum_{s\in D} \{u(s) | t \subseteq s\}$ = $SWU(t)$.
\end{proof}

Theorem \ref{three-upper-bounds} indicates that the $PEU$ model is a tighter upper-bound compared to the \textit{SEU} and $SWU$ upper-bounds. Based on the $PEU$ model, the designed algorithm can prune more candidates than using the \textit{SEU} and $SWU$ models. The $PEU$ model can be used to estimate the utility values of  candidate sequences and their super-sequences. Based on the definition of HUSP and the above theorems, it can be found that if the $PEU$ of a sequence $ t $ is less than the minimum utility threshold, $ t $ and the super-sequences of $ t $ are not HUSPs.  Thus, the candidate sequences having $PEU$ values that are less than the minimum utility threshold are discarded from the candidate set by the proposed algorithm so that their child nodes (super-sequences) are not generated and explored in the LS-tree.

\subsection{Pruning Strategies}
A large amount of candidates may be generated from a candidate sequence $ t $ by performing \textit{I-Concatenations} and \textit{S-Concatenations} with items. To reduce the number of candidate sequences, this paper proposes a look ahead strategy (LAS) to eliminate unpromising candidate items early.

\begin{theorem}
	\label{theorem:las}
	Given a sequence $ t $ and a quantitative sequential database $ D $, two situations are considered to generate a super-sequence:
	
	1)	if $ i_{j} $ is a \textit{I-Concatenation} candidate item of $ t $, the maximal utility of $<$$t \oplus i_{j}$$>$$_{I-Concatenation}$ is no more than $\displaystyle \sum_{s\in D} \{PEU(t,s)|$$<$$t \oplus i_{j}$$>$$_{I-Concatenation} \subseteq s\} $.
	
	2)	if $ i_{j} $ is a \textit{S-Concatenation} candidate item of $ t $, the maximal utility of $<$$t$$ \oplus i_{j}$$>$$_{S-Concatenation}$ is no more than $\displaystyle \sum_{s\in D} \{PEU(t,s)|$$<$$t \oplus i_{j}$$>$$_{S-Concatenation} \subseteq s \} $.
\end{theorem}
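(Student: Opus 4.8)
The plan is to prove both cases at once by bounding, for each transaction $s$ that contains the extended sequence, the per-transaction contribution $u(\langle t\oplus i_j\rangle,s)$ by $PEU(t,s)$, and then summing over all such transactions. Write $t' = \langle t\oplus i_j\rangle$ for whichever of the two concatenation operations is under consideration. Since $t\subseteq t'$, transitivity of the containment relation gives $t\subseteq s$ whenever $t'\subseteq s$, so $PEU(t,s)$ is well defined for every transaction appearing in the sum on the right-hand side.

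First I would unfold the definitions: $u(t') = \sum_{s\in D}\{u(t',s)\mid t'\subseteq s\}$, and for a fixed $s$ with $t'\subseteq s$ pick a $q$-sequence $s_k$ with $t'\sim s_k$, $s_k\subseteq s$ and $u(t',s)=u(s_k)$. Because $t'$ is obtained from $t$ by inserting a single occurrence of the item $i_j$ — inside the last itemset for an \textit{I-Concatenation}, or in a freshly appended itemset for an \textit{S-Concatenation} — deleting that occurrence of $i_j$ from $s_k$ yields a $q$-sequence $s_{k_t}$ with $t\sim s_{k_t}$, $s_{k_t}\subseteq s$, and $u(s_k)=u(s_{k_t})+u(i_j,\cdot)$, where $u(i_j,\cdot)$ denotes the utility of the deleted occurrence.

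The key step is to show $u(i_j,\cdot)\le u(\langle s-s_{k_t}\rangle_{rest})$, and this is precisely where the hypothesis that $i_j$ is a concatenation candidate item of $t$ enters. For an \textit{I-Concatenation} candidate, the occurrence of $i_j$ used in $s_k$ lies in the itemset of $s$ containing the concatenation point of $s_{k_t}$ but strictly after that point; for an \textit{S-Concatenation} candidate, it lies in an itemset of $s$ strictly after the itemset containing that concatenation point. In either case the occurrence of $i_j$ sits inside the remainder $\langle s-s_{k_t}\rangle_{rest}$ of $s$ after the match $s_{k_t}$, and since all item utilities are nonnegative, the utility of one occurrence cannot exceed the total utility of that remainder. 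Hence $u(s_k)\le u(s_{k_t})+u(\langle s-s_{k_t}\rangle_{rest})$, and since $s_{k_t}$ is one of the matches of $t$ in $s$ over which the maximum defining $PEU(t,s)$ is taken, we conclude $u(t',s)=u(s_k)\le PEU(t,s)$. Summing over all $s\in D$ with $t'\subseteq s$ then yields $u(t')\le \sum_{s\in D}\{PEU(t,s)\mid t'\subseteq s\}$, which is exactly the claimed bound in both parts.

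I expect the main obstacle to be making that key step airtight: one has to pin down exactly what the remainder $\langle s-s_{k_t}\rangle_{rest}$ contains — in particular, consistent with the remaining-utility field of the UL-list, that it includes the items following the concatenation point \emph{within the same itemset} — so that the occurrence of $i_j$ contributed by an \textit{I-Concatenation} is genuinely captured by $u(\langle s-s_{k_t}\rangle_{rest})$, and to verify that the match $s_{k_t}$ extracted from the optimal match of $t'$ is an admissible choice in the maximization defining $PEU(t,s)$. Everything else is bookkeeping over the definitions already established.
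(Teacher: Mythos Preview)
Your argument is correct, but it takes a more hands-on route than the paper's. The paper's proof simply chains the two preceding theorems: for $t' = \langle t\oplus i_j\rangle$ (either concatenation), Theorem~\ref{theorem:upper-bound-meu} gives $u(t')\le PEU(t')$, and Theorem~\ref{theorem:meu} gives $PEU(t',s)\le PEU(t,s)$ whenever $t'\subseteq s$; hence $u(t')\le\sum_{s\in D}\{PEU(t',s)\mid t'\subseteq s\}\le\sum_{s\in D}\{PEU(t,s)\mid t'\subseteq s\}$ in one line. What you do instead is re-derive the per-transaction inequality $u(t',s)\le PEU(t,s)$ from scratch by peeling off the appended item $i_j$ from the optimal match of $t'$ and arguing that its utility is absorbed by the remainder term --- which is precisely the decomposition already carried out inside the proof of Theorem~\ref{theorem:meu}, specialized to a one-item extension. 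Your version has the merit of being self-contained and of making explicit why the appended occurrence of $i_j$ lands in $\langle s-s_{k_t}\rangle_{rest}$ (a point the paper's proof of Theorem~\ref{theorem:meu} glosses over with the single inequality $u(s_{q'_e})+u(\langle s-s_{q'}\rangle_{rest})\le u(\langle s-s_{q'_t}\rangle_{rest})$). On the other hand, the paper's route reveals that the hypothesis ``$i_j$ is a concatenation candidate item of $t$'' plays no mathematical role in the bound --- Theorems~\ref{theorem:meu} and~\ref{theorem:upper-bound-meu} hold for arbitrary super-sequences --- so your emphasis on that hypothesis as the ``key step'' is somewhat misplaced.
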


\begin{proof}
	For 1), let $ t' $ = $<$$t\oplus i_{j}$$>$$_{I-Concatenation}$ for convenience. By  Theorem \ref{theorem:meu}, $ PEU(t', s)\leq PEU(t, s) $. Based on Theorem \ref{theorem:upper-bound-meu}, $ u(t') \leq PEU(t')$.	Thus $\displaystyle u(t') \leq PEU(t') $ = $ \sum_{s\in D} \{PEU(t', s) | t' \subseteq s \} \leq \sum_{s\in D} \{PEU(t, s) | t' \subseteq s \} $. 
	In the same way, 2) holds.
\end{proof}

\textbf{Look Ahead Strategy (LAS):}
Given a sequence $ t $ and a quantitative sequential database $ D $, two situations are considered: 

1). If $ i_{j} $ is a \textit{I-Concatenation} candidate item for $ t $ and $\displaystyle \sum_{s\in D} \{PEU(t,s) | $$<$$t \oplus i_{j}$$>$$_{I-Concatenation} \subseteq s\} $ is less than the minimum utility threshold, $ i_{j} $ should be removed from $ C^{I} $ (the set of candidate items for \textit{I-Concatenation} with $ t $);

2). If $ i_{j} $ is a \textit{S-Concatenation} candidate item for $ t $ and  $\displaystyle \sum_{s\in D} \{PEU(t,s) | $$<$$t \oplus i_{j}$$>$$_{S-Concatenation} \subseteq s \} $ is less than the minimum utility threshold, $ i_{j} $ should be removed from $ C^{S} $ (the set of candidate items for \textit{S-Concatenation} with $ t $).

The LAS strategy can be used to quickly remove unpromising candidate items so that they are not considered for  \textit{I-Concatenation} and \textit{S-Concatenation} of a  sequence $ t $. Thus, this strategy is useful to avoid calculating the $PEU$ values of $<$$t\oplus i_{j}$$>$$_{I-Concatenation}$ and $<$$t$$\oplus i_{j}$$>$$_{S-Concatenation}$ for each removed item $i_j$. 
Since the upper-bound can be calculated from the utility linked lists of $ t $, LAS can remove unpromising candidate items in advance. As a result, the execution time of the algorithm can be reduced since a smaller set of candidate items are considered for concatenations with $ t $.

The downward closure property based on the PEU model provides a tight upper-bound to reduce the search space for mining HUSPs. However, several useless items appear in the extensions of sequences in each transaction, which may lead to high upper-bound values. To further reduce the search space, an irrelevant item pruning strategy (IPS) is designed as follows.

\begin{theorem}
	For any sequence $ t $ and  item $ i_{j}\in I(t)_{rest} $, the maximal utility of $ <$$t \oplus i_{j}$$>$$_{I-Concatenation} $ or $<$$t \oplus i_{j}$$>$$_{S-Concatenation} $ is no more than $\displaystyle \sum_{s\in D} \{PEU(t, s) | ($$<$$t \oplus i_{j}$$>$$_{I-Concatenation} \subseteq s) $ $ \vee $ $($$<$$t \oplus i_{j}$$>$$_{S-Concatenation} \subseteq s) \} $. 
\end{theorem}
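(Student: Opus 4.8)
The plan is to obtain the stated bound as an immediate weakening of the two one-sided inequalities already established in Theorem~\ref{theorem:las}. For brevity write $t^{I} := \langle t\oplus i_{j}\rangle_{\textnormal{I-Concatenation}}$ and $t^{S} := \langle t\oplus i_{j}\rangle_{\textnormal{S-Concatenation}}$ for the two super-sequences obtained by appending the item $i_{j}$ to $t$, and set
\[
R \;:=\; \sum_{s\in D}\bigl\{\,PEU(t,s)\ \big|\ (t^{I}\subseteq s)\vee(t^{S}\subseteq s)\,\bigr\}.
\]
Because $i_{j}\in I(t)_{rest}$, the item $i_{j}$ occurs in $\langle s-t\rangle_{rest}$ for at least one $s\in D$, so at least one of $t^{I}\subseteq s$ and $t^{S}\subseteq s$ holds for that transaction and the index set of $R$ is non-empty. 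The goal is then to show $u(t^{I})\le R$ and $u(t^{S})\le R$, which together are exactly the assertion that the maximal utility of $t^{I}$ or of $t^{S}$ is no more than $R$.

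First I would invoke Theorem~\ref{theorem:las}: part~(1) gives $u(t^{I})\le\sum_{s\in D}\{PEU(t,s)\mid t^{I}\subseteq s\}$ and part~(2) gives $u(t^{S})\le\sum_{s\in D}\{PEU(t,s)\mid t^{S}\subseteq s\}$. Next I would use the set inclusions $\{s\in D\mid t^{I}\subseteq s\}\subseteq\{s\in D\mid (t^{I}\subseteq s)\vee(t^{S}\subseteq s)\}$ and $\{s\in D\mid t^{S}\subseteq s\}\subseteq\{s\in D\mid (t^{I}\subseteq s)\vee(t^{S}\subseteq s)\}$, together with the fact that all internal quantities and external profits are non-negative, which forces $u(\cdot)\ge 0$ and hence $PEU(t,s)\ge 0$ for every $s$. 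Enlarging the summation domain of either one-sided sum to the set $\{s\in D\mid (t^{I}\subseteq s)\vee(t^{S}\subseteq s)\}$ therefore only adds non-negative terms, so each one-sided sum is at most $R$. Chaining the inequalities gives $u(t^{I})\le R$ and $u(t^{S})\le R$, as required.

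I do not expect a genuine obstacle here: the argument reduces to ``Theorem~\ref{theorem:las} plus monotonicity of a sum of non-negative terms under enlargement of its index set''. The only points that deserve an explicit word are the non-negativity of $PEU(t,s)$ (which is what legitimizes enlarging the domain) and the reading of the disjunctive condition so that one and the same quantity $R$ dominates both one-sided bounds. The reason for packaging the bound in this combined form is the irrelevant-item pruning strategy (IPS): if $R$ is below the minimum utility threshold, then neither $t^{I}$ nor $t^{S}$ is a HUSP and, by Theorems~\ref{theorem:meu} and~\ref{theorem:upper-bound-meu}, neither is any of their super-sequences, so $i_{j}$ can be discarded from the remaining-item set of $t$ in a single test instead of being checked separately against the $I$-concatenation candidate set $C^{I}$ and the $S$-concatenation candidate set $C^{S}$.
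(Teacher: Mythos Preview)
Your proposal is correct and matches the paper's own proof essentially line for line: invoke Theorem~\ref{theorem:las} for each concatenation type, then enlarge the index set of the sum (using non-negativity of $PEU(t,s)$) to the disjunctive set. The paper is slightly terser---it leaves the non-negativity implicit and dispatches the $S$-Concatenation case with ``a similar proof can be done''---but the argument is identical.
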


\begin{proof}
	For a concatenation that  $ <$$t \oplus i_{j}$$>$$_{I-Concatenation} $, based on Theorem \ref{theorem:las}, we have $\displaystyle \sum_{s\in D} \{PEU(t, s) $$ | ($$<$$t \oplus i_{j}$$>$$_{I-Concatenation} \subseteq s)$ $ \vee$ $ ($$<$$t \oplus i_{j}$$>$$_{S-Concatenation} \subseteq s) \} \geq \sum_{s\in D} \{PEU(t,s) | $$<$$t \oplus i_{j}$$>$$_{I-Concatenation} \subseteq s \} \geq u($$<$$t \oplus i_{j}$$>$$_{I-Concatenation}) $. 
	A similar proof can be done for $ <$$t \oplus i_{j}$$>$$_{S-Concatenation} $. 
\end{proof}

\textbf{Irrelevant Item Pruning Strategy (IPS):} 
Given a sequence $ t $ and an item $ i_{j} \in I(t)_{rest} $, if $\displaystyle \sum_{s\in D} \{PEU(t, s) | ($$<$$t \oplus i_{j}$$>$$_{I-Concatenation} \subseteq s) $ $\vee$ $ ($$<$$t \oplus i_{j}$$>$$_{S-Concatenation} \subseteq s) \} $ is less than the minimum utility threshold, $ i_{j} $ is called an irrelevant item of $ t $ and  should be removed from the utility linked lists of $ t $ and $ t $'s supersets.

With the help of the IPS, the remaining utility values of candidate sequences in each transaction decrease, since many irrelevant items can be ignored. As a result, the $PEU$ values of candidate sequences can  greatly decrease, and more candidates may be removed by the IPS.

Using the LAS and IPS pruning strategies, the designed algorithm can eliminate a large number of candidates. 
Consider a sequence $t$ that is processed by the algorithm.
First, the candidate items for \textit{I-Concatenation} and \textit{S-Concatenation} with $ t $ are pruned by the IPS,  and the UL-lists of $ t $ are recalculated. Then, the candidate items for \textit{I-Concatenation} and \textit{S-Concatenation} of the processed sequence $ t $ are assessed using the LAS instead of their $SWU$ values. 

Then, the designed algorithm generates new sequences by concatenating the processed sequence $ t $ with the candidate items. If the utility of the newly explored candidate sequence is no less than the minimum utility threshold, it is a HUSP. 
By applying the downward closure property, the  $PEU$ of the new sequence is  then checked to decide whether its super-sequences should be explored. 

\subsection{The HUSP-ULL Algorithm}

Based on the designed utility-linked (UL)-list structure (Section 4.2), the downward closure property (Section 4.3), and the above pruning strategies (Section 4.4), the designed algorithm named HUSP-ULL (High-Utility Sequential Pattern mining with UL-list) is proposed in this section.

The pseudo-code of the HUSP-ULL algorithm  is given in Algorithm~\ref{label1}. It first scans the quantitative sequential database $ D $ to calculate $ u(s) $ and build the UL-list of each $q$-sequence $ s\in D $ (Line 1). For each item $ i_{j} \in D $, the algorithm builds the projected database \textit{PD}($<$$i_{j}$$>$) to store the UL-lists of the transformed transactions (Line 4). 
The utility and $SWU$ of each 1-sequence are calculated using the corresponding projected database (Line 5). The 1-sequences having $SWU$ values that are no less than the minimum utility threshold are considered as candidate HUSPs (Lines 5 to 10). Thus, those 1-sequences with low $SWU$ values that are exactly deemed unpromising for \textit{I}-\textit{Concatenation} or \textit{S}-\textit{Concatenation}, they are moved in this step. And the 1-sequences having utilities that are no less than the minimum utility threshold are output as HUSPs (Lines 6 to 8). Using the special set of candidate HUSPs that were eliminated before, the HUSP-ULL algorithm can begin the projection growth with the built projected database \textit{PD}($<$$i_{j}$$>$). Next, the candidate HUSPs are considered as \textit{prefix} by the \textbf{PGrowth} procedure for mining larger HUSPs (Line 12). 


\begin{algorithm}
	\caption{HUSP-ULL}
	\label{label1}
	\begin{algorithmic}[1]
		\REQUIRE {\textit{D}, a quantitative sequential database; \textit{utable}, a utility table containing the unit profit of each item; $\delta$, the minimum utility threshold.} 
		\ENSURE {The set of \textit{HUSPs}.}
		\STATE scan $ D $ to: 1). calculate $u(s)$ for each $s\in D $ and calculate $ u(D) $; 2). build the UL-list of each $s\in D$;
		\STATE $HUSPs \gets \varnothing $;
		\FOR {each $ i_{j}\in D $}
		\STATE \textit{PD}($ < $$ i_{j} $$ > $)$\gets$\{the UL-list of $s|$$<$$i_{j}$$>$$\subseteq s\wedge s\in D\}$;
		\STATE calculate $SWU$($<$$i_{j}$$>$) and $u$($<$$i_{j}$$>$);
		\IF {$SWU$($<$$i_{j}$$>$)$ \geq \delta \times u(D) $}
		\IF {$u$($<$$i_{j}$$>$)$\geq \delta \times u(D)$)}
		\STATE $HUSPs$$\gets$$HUSPs$$\cup$$<$$i_{j}$$>$;
		\ENDIF
		\STATE \textbf{PGrowth}($ < $$ i_{j} $$ > $, \textit{PD}($ < $$ i_{j} $$ > $), \textit{HUSPs});
		\ENDIF
		\ENDFOR
		\STATE \textbf{return} \textit{HUSPs}
	\end{algorithmic}
\end{algorithm}


The \textbf{PGrowth} procedure (Algorithm \ref{pgrowth}) performs a depth-first search  to enumerate sequences by following the \textit{sequence-ascending} order. Sequences are enumerated by applying the \textit{I-Concatenation} and \textit{S-Concatenation} operations. 
The algorithm first removes irrelevant items and then recalculates the UL-list, as per the proposed IPS pruning strategy (Line 1).
Then, the algorithm scans the reduced projected database \textit{PD}(\textit{prefix}) to obtain $ C^{I} $ (the set of candidate items to be used for \textit{I-Concatenation}) (Line 2). 
To reduce the number of candidate items for \textit{I-Concatenation} with the sequence \textit{prefix},  the upper-bound values of the candidate items are calculated using  \textit{PD}(\textit{prefix}). 
Based on the proposed LAS pruning strategy, a candidate item $ i_{j} $ is discarded if its upper-bound is less than the minimum utility threshold. The reduced set of candidate items for \textit{S-Concatenation} with the sequence \textit{prefix}, denoted as $C^{S}$, is obtained in the same way (Line 6). 
After the concatenation operations are performed, the newly generated sequences are evaluated by applying the \textbf{Judge} procedure (Lines 4 and 8), which is explained next.

\begin{algorithm}
	\caption{\textbf{PGrowth}(\textit{prefix}, \textit{PD}(\textit{prefix}), $ HUSPs $)}
	\label{pgrowth}
	\begin{algorithmic}[1]
		\STATE $ PD(prefix) \gets $ IPS($ PD(prefix) $)  \COMMENT{measured by IPS}
		\STATE scan $ PD(prefix) $ to get $ C^{I} $;  \COMMENT{measured by LAS}
		\FOR {each $ i_{j}\in C^{I} $} 
		\STATE \textbf{Judge}($<$\textit{prefix}$\oplus i_{j}$$>_{I-Concatenation}$, \textit{PD}(\textit{prefix}), \textit{HUSPs});
		\ENDFOR
		\STATE scan $ PD(prefix) $ to get $ C^{S} $;  \COMMENT{evaluated using LAS}
		\FOR {each $ i_{j}\in C^{S} $}
		\STATE \textbf{Judge}($ < $\textit{prefix}$\oplus$$i_{j}$$>_{S-Concatenation}$, \textit{PD}(\textit{prefix}), \textit{HUSPs});
		\ENDFOR
	\end{algorithmic}
\end{algorithm}

The \textbf{Judge} procedure (Algorithm \ref{judge}) first builds the projected database \textit{PD}(\textit{prefix}')  from \textit{PD}(\textit{prefix}) (Line 1). 
The $PEU$ and utility of \textit{prefix'} are then calculated from \textit{PD}(\textit{prefix'}) (Line 2). If the utility of \textit{prefix'} is no less than the minimum utility threshold, \textit{prefix'} is identified as a HUSP (Lines 4 to 5). If the $PEU$ of \textit{prefix'} is no less than the minimum utility threshold, the \textbf{PGrowth} procedure is then applied with \textit{prefix'} to discover HUSPs by considering the super-sequences of \textit{prefix'} (Lines 3 to 6). The algorithm terminates if no candidates are generated. Finally, the designed algorithm  returns the set of discovered HUSPs.

\begin{algorithm}
	\caption{\textbf{Judge}(\textit{prefix'}, \textit{PD}(\textit{prefix}), \textit{HUSPs})}
	\label{judge}
	\begin{algorithmic}[1]
		\STATE \textit{PD}(\textit{prefix'})$\gets$\{UL-list of $s|$\textit{prefix'}$\subseteq s\wedge s\in$ \textit{PD}(\textit{prefix})\};
		\STATE calculate $ u $(\textit{prefix'}) and $ PEU $(\textit{prefix'});
		\IF {$ PEU $(\textit{prefix'})$\geq \delta \times u(D)$}
		\IF {$ u $(\textit{prefix'})$\geq \delta \times u(D) $}
		\STATE $ HUSPs $$\gets$$ HUSPs $$\cup$\textit{prefix'};
		\ENDIF
		\STATE \textbf{PGrowth}(\textit{prefix'}, \textit{PD}(\textit{prefix'}), \textit{HUSPs});
		\ENDIF
	\end{algorithmic}
\end{algorithm}

\section{Experimental Results}
\label{sec:experiment}

In this section, we conduct experiments on several real datasets to showcase the advantage of HUSP-ULL in the task of high-utility sequential pattern mining. In particular, we aim to answer the following research questions via the experiments:

$ \bullet $ How effectively HUSP-ULL can discover the useful high-utility sequential patterns with observed timestamps from the quantitative sequential datasets?

$ \bullet $ How HUSP-ULL benefits from each component of the propose structure and the developed pruning strategies for mining HUSPs?

$ \bullet $ How efficiently HUSP-ULL can be applied when handling large data with different sizes?

\begin{table}[!htbp] 	
	\setlength{\abovecaptionskip}{0pt}
	\setlength{\belowcaptionskip}{0pt} 
	\caption{Parameters of the Datasets.}
	\centering
	\begin{tabular}{|c|c|}
		\hline
		$\mathbf{\#|D|}$ &	Number of sequences \\ \hline
		$\mathbf{\#|I|}$ &	Number of distinct items     \\ \hline
		\textbf{C} & Average number of itemsets per sequence   \\ \hline
		\textbf{T} & Average number of items per itemset   \\ \hline
		\textbf{MaxLen} &	Maximum number of items per sequence                \\ \hline 
	\end{tabular}
	\label{table:paras}
\end{table}

\begin{table}[!htbp] 	
	\setlength{\abovecaptionskip}{0pt}
	\setlength{\belowcaptionskip}{0pt} 
	\caption{Characteristics of the Datasets.}
	\centering
	\begin{tabular}{|c|c|c|c|c|c|}
		\hline
		\textbf{Dataset}		& $\mathbf{\#|D|}$ & $\mathbf{\#|I|}$ &	\textbf{C} & \textbf{T} & \textbf{MaxLen}  \\ \hline 
		Sign                    & 730           &   267     & 52.0  & 1 & 94 \\ \hline
		Bible                   & 36,369        & 13,905    & 21.6  & 1 &   100 \\ \hline
		SynDataset-160k         & 159,501        & 7,609       & 6.19   & 4.32 & 20 \\ \hline
		Kosarak10k              & 10,000        & 10,094    & 8.14   & 1 &   608 \\ \hline
		Leviathan               & 5,834          & 9,025      & 33.8  & 1 & 100 \\ \hline
		yoochoose-buys          & 234,300         & 16,004     & 1.13  & 1.97 & 21 \\ \hline
	\end{tabular}
	\label{table:chars}
\end{table}

\begin{figure*}[!htbp]
	\setlength{\abovecaptionskip}{0pt}
	\setlength{\belowcaptionskip}{0pt}	
	\centering
	\includegraphics[trim=20 0 20 0,clip,scale=0.56]{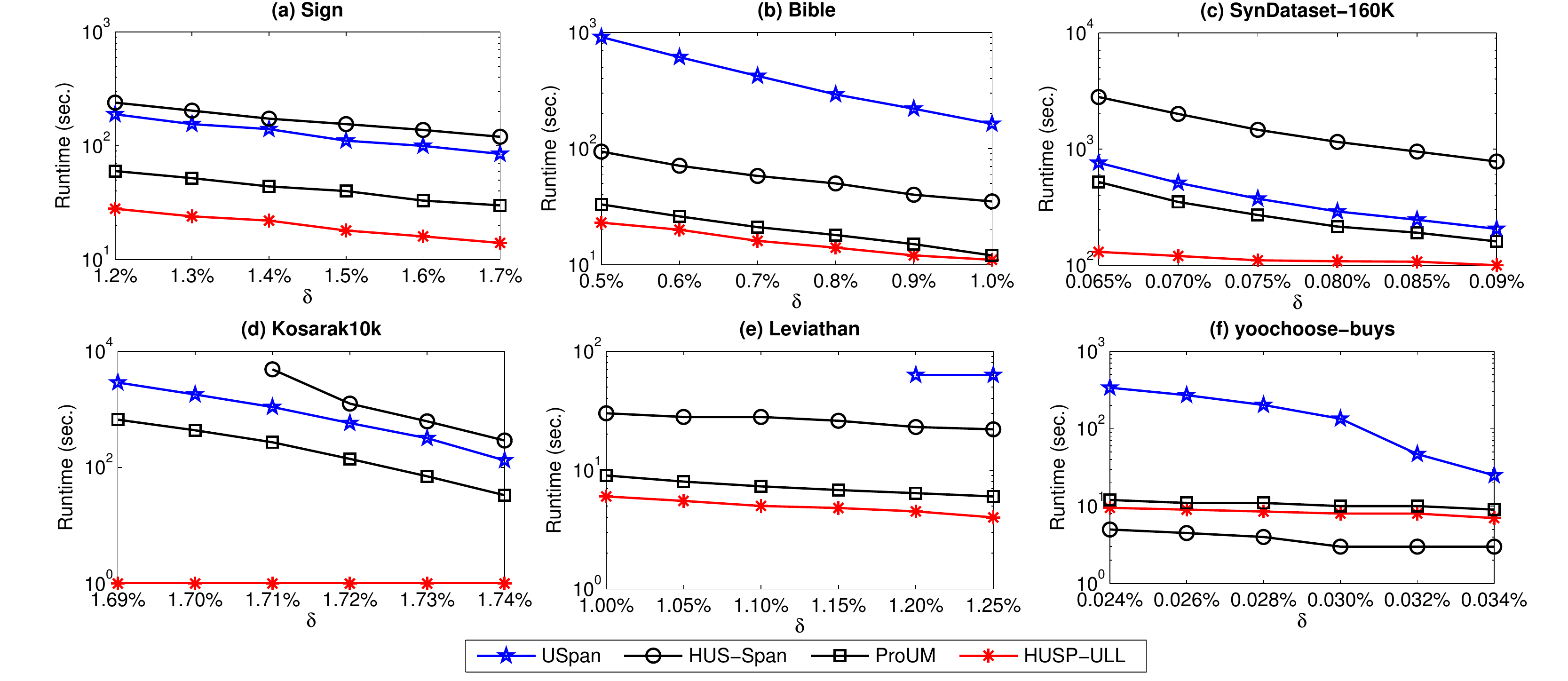}
	\caption{Runtime for various $ \delta $ values.}
	\label{fig:runtime1}
\end{figure*}

\subsection{Datasets}
Totally six real-life datasets \cite{fournier2016spmf} and one synthetic dataset were used in the experiments to evaluate the performance of the proposed algorithm. 

$ \bullet $ \textbf{Sign}  is a real-life dataset of sequences of sign language utterance, created by the National Center for Sign Language and Gesture Resources at Boston University. Each utterance in the dataset is associated with a segment of video with a detailed transcription.
	
$ \bullet $ \textbf{Bible} is a real-life dataset obtained by converting the Bible into a set of sequences of items (words).

$ \bullet $ \textbf{SynDataset-160K} is a synthetic dataset that generated by IBM Quest Dataset Generator \cite{agrawal1994dataset}. It contains 160,000 sequences. This synthetic sequential dataset with different data sizes (from 10,000 sequences to 400,000 sequences, named  C8S6T4I3D$ | $X$ | $K) were also used to evaluate the scalability of the compared approaches. 
	
$ \bullet $ \textbf{Kosarak10k} is a real-life dataset of click-stream data from a Hungarian news portal, which is a subset of the original Kosarak dataset \cite{FIM-dataset}. 

$ \bullet $ \textbf{Leviathan} is a conversion of Thomas Hobbes' Leviathan novel (1651) to a sequence of items (words). 
	
$ \bullet $ \textbf{yoochoose-buys} commercial dataset was constructed by YOOCHOOSE GmbH to support participants in the RecSys Challenge 2015\footnote{\url{https://recsys.acm.org/recsys15/challenge/}}. It contain a collection of 1,150,753 sessions from a retailer, where each session is encapsulating the click events. The total number of item IDs and category IDs is 54,287 and 347 correspondingly, with an interval of 6 months.

Parameters and characteristics of these datasets are respectively shown in Table \ref{table:paras} and Table \ref{table:chars}. In the field of utility mining, a simulation model \cite{liu2005two} was widely used in the previous studies \cite{tseng2013efficient,zida2015efim,wang2016efficiently} to generate the quantities and unit profit values of items in the sequential datasets.  In order to achieve a fair comparison, this simulation model \cite{liu2005two} was adopted in our experiments. Note that the quantity of each item is randomly generated in the [1, 5] interval. A log-normal distribution was used to randomly assign  profit values of items in the [0.01, 10.00] interval.  The above datasets can be downloaded from \cite{fournier2016spmf}.

\subsection{Experimental Settings}

All the compared algorithms were implemented in Java. The experiments were carried out on a personal computer equipped with an Intel(R) Core(TM) i7-7700HQ CPU @ 2.80 GHz 2.81 GHz, 32 GB of RAM, running the 64-bit Microsoft Windows 10 operating system.  

Although many utility-oriented HUSPM algorithms have been developed before, we conduct experiments against the following state-of-the-art HUSPM methods:

$ \bullet $ \textbf{USpan} \cite{yin2012uspan}: It is one of the commonly compared HUSP algorithm that uses utility matrix and two upper-bounds on utility for width and depth pruning. In the experiments, the USpan algorithm was fixed and replaced its upper bound by \textit{SEU} \cite{gan2019proum}. 

$ \bullet $ \textbf{HUS-Span} \cite{wang2016efficiently}: It is a SWU-based framework designed for identifying high-utility sequences. This method combines  two quantitative metrics, called Reduced Sequence Utility (\textit{RSU}) and Projected-Database Utility (\textit{PDU}), to prune low utility sequences. 

$ \bullet $ \textbf{ProUM} \cite{gan2019proum}: This projection-based model utilizes the sequence extension utility (\textit{SEU}) to present the maximum utility  of  the  possible  extensions  that  based  on  the  prefix. Besides, it applies the project mechanism during the construction of the utility-array, which makes it achieves the better performance than previous HUSPM algorithms, e.g., USpan, HUS-Span.

In the following sections, substantial experiments were performed to evaluate the effectiveness and efficiency of the proposed HUSP-ULL algorithm. Notice that the source code and test datasets will be released at the well-known SPMF \cite{fournier2016spmf} \footnote{\url{http://www.philippe-fournier-viger.com/spmf/}} data mining platform after the acceptance for publication.

\begin{figure*}[!htbp]
	\setlength{\abovecaptionskip}{0pt}
	\setlength{\belowcaptionskip}{0pt}	
	\centering
	\includegraphics[trim=10 0 10 0,clip,scale=0.54]{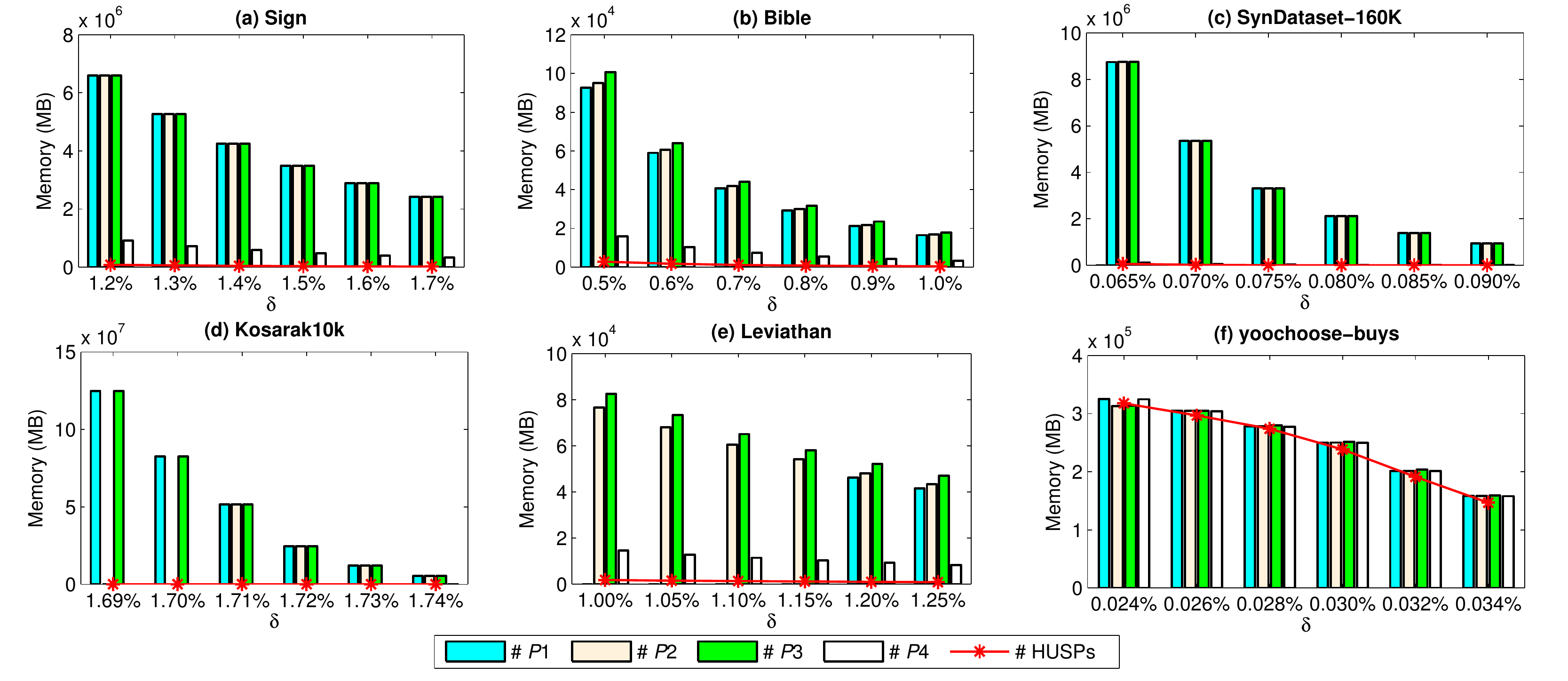}
	\caption{Number of patterns (candidates and final results) under various $ \delta $ values.}
	\label{fig:pattern}
\end{figure*}

\subsection{Efficiency}
In a first experiment, the runtime of the designed algorithm was compared with that of the existing state-of-the-art algorithms. The runtime was measured by considering both the time used by the CPU and the time required for disk  I/O accesses. Fig. \ref{fig:runtime1} shows the runtime of the compared algorithms for various minimum utility thresholds (denoted as $ \delta $ values).

We now discuss the result concerning the efficiency of HUSP-ULL. It can be seen in Fig. \ref{fig:runtime1} that the proposed HUSP-ULL algorithm outperforms other approaches for all datasets except for yoochoose-buys under various threshold values. Generally, the proposed HUSP-ULL  algorithm is faster than the other algorithms by at least one order of magnitude. For example, in Figs. \ref{fig:runtime1} (c), (d), (e), the  HUS-Span  and USpan algorithms spend more than 1000 seconds, and in some cases, cannot even terminate in a reasonable time. In contrast, HUSP-ULL  spends less than 100 seconds to output the results  for these threshold values.

As $ \delta $ is decreased, the compared approaches become slower. The runtime of ProUM and HUSP-ULL increases smoothly, while the runtime of the compared USpan and HUS-Span algorithms increases more rapidly. For example,  in the case we can see in Fig. \ref{fig:runtime1} (c) that the runtime of HUS-Span and USpan increases dramatically while the threshold values are only slightly changed. Thus, the runtime performance of HUS-Span and USpan are very sensitive with respect to the parameter settings. Generally, when $ \delta $ is set  to a small value, the runtime of HUS-Span and USpan sharply increases due to their actual search space  and the large number of candidates that they generated. Thus, it demonstrates that the designed utility-linked list-based HUSP-ULL algorithm are able to significantly improve the performance in terms of running time.

It is important to note that the USpan algorithm outperforms the HUS-Span algorithm in Figs. \ref{fig:runtime1} (a), (c), (d), while HUS-Span  outperforms USpan in Fig. \ref{fig:runtime1} (b), (e), (f).  In most cases, the projection-based ProUM algorithm performs better than USpan and HUS-Span. Besides, it can be seen that the USpan algorithm cannot return results in Fig. \ref{fig:runtime1} (e) since it runs out of memory. The USpan algorithm builds a series of utility-matrix to store utility information about patterns, but it requires additional processing time. Thus the runtime of USpan  is larger than that of ProUM in many cases.

\subsection{Effectiveness of Pruning Strategies}

In order to evaluate the effectiveness of pruning strategies, the number of generated candidates of all compared algorithms and the number of discovered high-utility sequential patterns (\textbf{\# HUSPs}) under different parameter settings are compared in this section. The results are shown  in Fig. \ref{fig:pattern}. Note that \textbf{\# \textit{P1}}, \textbf{\# \textit{P2}}, \textbf{\# \textit{P3}}, and \textbf{\# \textit{P4}}  denote the number of the candidate patterns generated by USpan, HUS-Span, ProUM, and HUSP-ULL, respectively. And \textbf{\# \textit{HUSPs}} denote the number of final HUSPs discovered by the three compared algorithms. Firstly, noticing that the searching task has its running time exceeds 10,000 seconds or out of memory (a maximum of 4096 MB (4 GB) of memory setting) when searching candidates and HUSPs, as shown with the notation ``-".

It can be seen in Fig.  \ref{fig:pattern} that the number of candidates generated by the HUSP-ULL algorithm is much less than that of the other algorithms. This shows that the designed HUSP-ULL algorithm and pruning strategies can greatly reduce the number of unpromising candidates for mining the HUSPs and hence reduces the requirements in terms of runtime and memory. In all test datasets, the number \textbf{\# \textit{P3}} is close to the number of \textbf{\# \textit{P2}}. It indicates that the upper bound named \textit{SEU} used in ProUM has the similar overestimate effect when compared to the \textit{PEU} upper bound used in HUS-Span.

As the minimum utility threshold $ \delta $ is decreased, the number of candidates increases for the  HUS-Span, USpan, and ProUM algorithms. In contrast, that number increases much more slowly for the HUSP-ULL algorithm. When the minimum utility threshold $ \delta $ is set lower, it is obvious that the HUSP-ULL algorithm generates much fewer candidates than the other algorithms. Especially, the USpan algorithm generates no results in Fig. \ref{fig:pattern} (d) due to a very large number of candidates.

\begin{figure*}[!htbp]
	\setlength{\abovecaptionskip}{0pt}
	\setlength{\belowcaptionskip}{0pt}	
	\centering
	\includegraphics[trim=0 0 10 0,clip,scale=0.54]{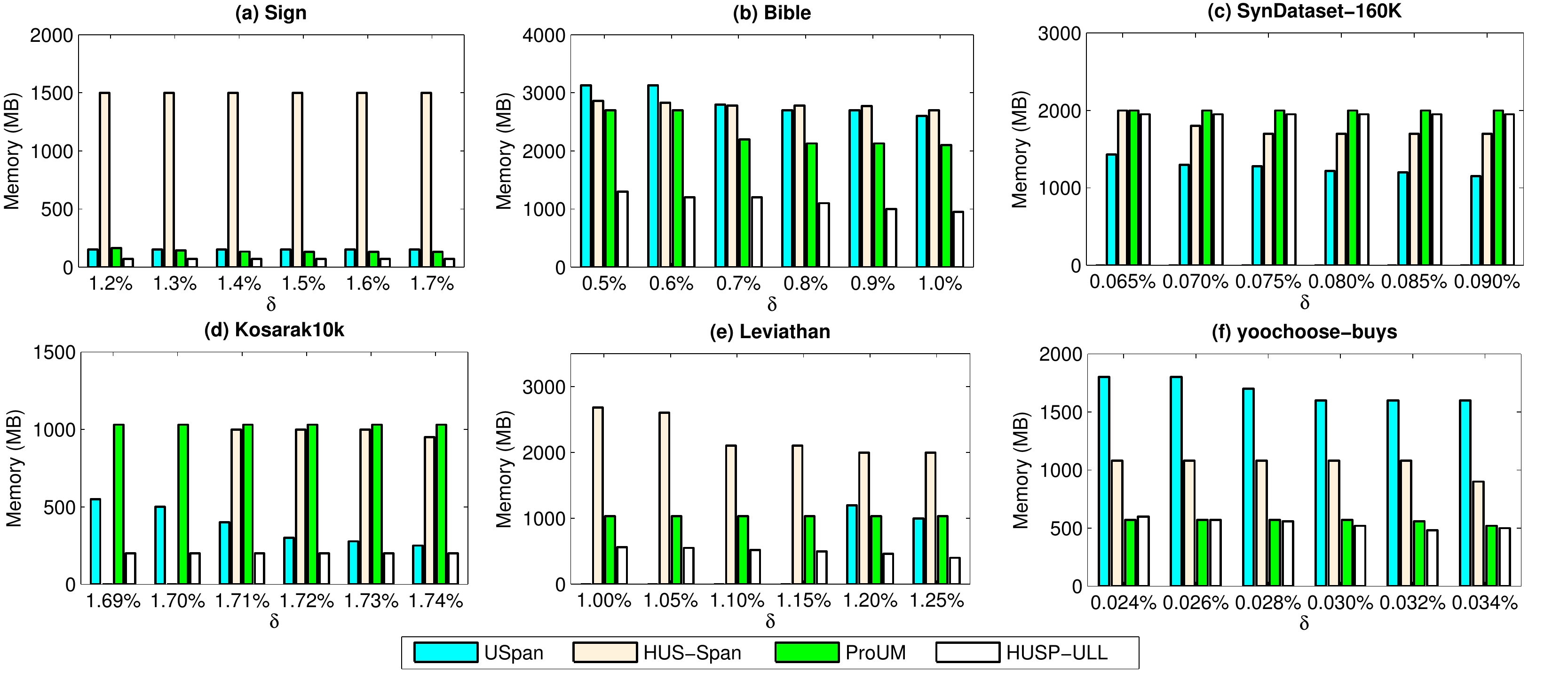}
	\caption{Memory usage under various $ \delta $ values.}
	\label{fig:memory}
\end{figure*}

\begin{figure*}[!htbp]
	\setlength{\abovecaptionskip}{0pt}
	\setlength{\belowcaptionskip}{0pt}	
	\centering
	\includegraphics[trim=20 200 30 0,clip,scale=0.55]{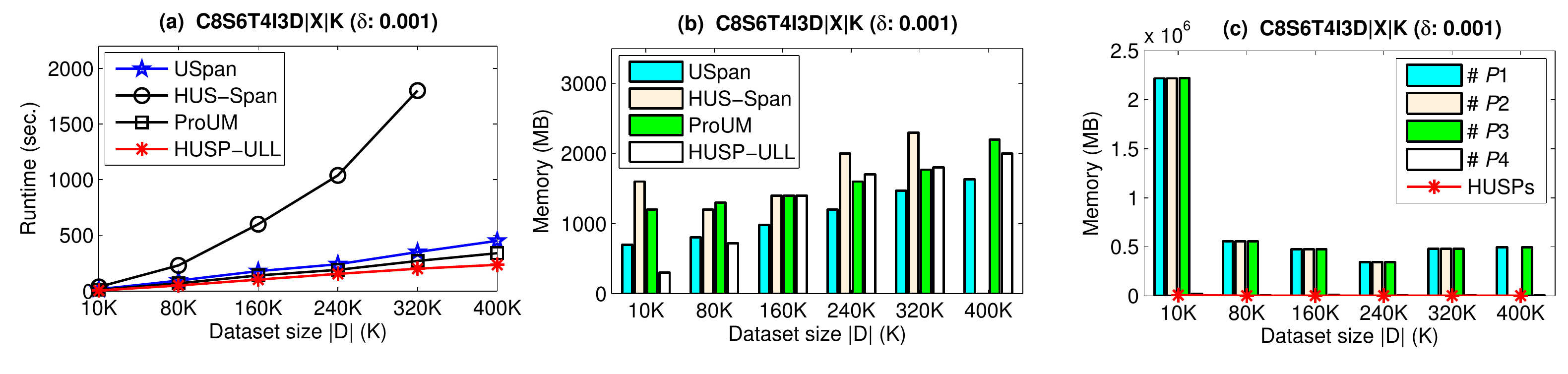}
	\caption{Scalability of the compared approaches.}
	\label{fig:scalability}
\end{figure*}

It can also be observed that the number of candidates generated by the USpan algorithm (with the \textit{SEU} upper bound) is close to that of the HUS-Span algorithm in most cases. For all compared HUSPM algorithms, the number of the final results of HUSPs is quite less than that of the generated candidates, such as \textbf{\# HUSPs} is less than \textbf{\# \textit{P1}}, \textbf{\# \textit{P2}}, \textbf{\# \textit{P3}}, and \textbf{\# \textit{P4}},  as shown in  Fig. \ref{fig:pattern} (a) to (e).  Although ProUM  uses a structure named utility-array to store sequences and utility information in memory, it still generates a huge number of candidate patterns for discovering high-utility sequential patterns.  The proposed HUSP-ULL algorithm employs the UL-list structure to speed up the mining process and uses projected databases to reduce memory consumption.

Even though the LS-tree may theoretically grow very large, in practice it stays relatively small in the proposed HUSP-ULL framework. We only consider a small part of the candidate space. That is, we only perform the \textit{I-Concatenation} and \textit{S-Concatenation} by combining the potential candidate patterns that may be the  promising high-utility patterns. In practice, as shown in the Bible dataset, we can find that HUSP-ULL only has to cache up to a few thousand candidates. Using the proposed pruning strategies in LS-tree, HUSP-ULL can speed up in computation, up to an order of magnitude, while the memory consumption is also reduced. 

\subsection{Memory Usage Evaluation}

For the applications of data mining and analytics, the memory usage of a data mining algorithm is one of the key measure criteria. Therefore, to show a good efficiency, it would be better to test memory usage in performance evaluation. In this subsection, we further evaluate the memory usage of all the compared algorithms. With the same parameter setting as run in Fig. \ref{fig:runtime1}, the memory usage of each algorithm under various $ \delta $ values are shown in Fig. \ref{fig:memory}.

As mentioned early, the maximum memory is set to 4096 MB, and USpan is run out of memory in Leviathan. It is clear that the proposed HUSP-ULL algorithm consumes the least memory among the compared algorithms with all parameter settings on all datasets, except for the SynDataset-160K. Among the compared algorithms, the memory usage of HUSP-ULL is always very stable. For example, under six varied $\delta$, it consumes around 200 MB on Kosarak10k, and consumes from 600 MB to 500 MB in yoochoose-buys. However, it can be observed that there is a sharp decrease in USpan and HUS-Span on some cases, as shown in Kosarak10k, Leviathan and yoochoose-buys. For example, USpan was run out of memory when $\delta$ is set less than 1.20\% in Leviathan.

It is also interesting to observed that HUS-Span sometimes may consume more memory than the utility matrix based USpan algorithm, as shown in Sign dataset. To summarize, in most cases on the test datasets, the proposed HUSP-ULL algorithm  significantly outperforms the state-of-the-art HUSPM algorithms in terms of memory consumption. The reason is that HUSP-ULL utilizes the compact UL-list and two pruning strategies to reduce the space complexity.

\subsection{Scalability}

We further evaluated the scalability of the compared approaches on the synthetic dataset C8S6T4I3D$ | $X$ | $K \cite{agrawal1994dataset} (recall that $ | $D$ | $ is the size of the dataset $D$). The results in terms of runtime and  number of candidates for  different threshold values are shown in Fig. \ref{fig:scalability}. The size of the synthetic dataset is varied from 10K to 400K sequences, with a threshold $\delta$: 0.001 at each test.

In Fig. \ref{fig:scalability}, it can be observed that the HUSP-ULL algorithm has better scalability than the compared state-of-the-art algorithms in large dataset. As the dataset size is increased, the runtime of  HUS-Span, USpan, ProUM, and HUSP-ULL increases, respectively. Notice that the HUS-Span algorithm does not return any results in some cases because it runs out of memory when the minimum utility threshold is set to a small value or when the dataset is very large. This is because the HUS-Span algorithm utilizes the utility-chain structure to store  utility information, which requires a large amount of memory, to speed up the mining process. If patterns match many  transactions, this structure can consume a large amount of memory. 

In Fig. \ref{fig:scalability} (c), it  can be found that the number of candidates does not increase when the dataset size is increased. This is reasonable since the minimum utility value (w.r.t. the value of $ \delta $ $\times $ $u(D) $)  increases as the dataset size is increased. Hence, fewer candidates are HUSPs, but the algorithms still spend time  to evaluate  candidates. Thus, the runtime increases with the dataset size.  
\section{Conclusion and Future Work}
\label{sec:conclusion} 
Utility-based sequence mining is a significant problem due to the subtle interesting patterns among different factors (e.g., timestamp, quantity, profit) and the meaningful knowledge triggered by complex real-life situations. This paper has proposed  a novel HUSP-ULL algorithm to discover high-utility sequential patterns (HUSPs) more efficiently. Specifically, the concept of utility-linked (UL)-list was developed and used to calculate the utilities and the upper-bound values of candidates for deriving all HUSPs. By utilizing the designed LS-tree, UL-list structure, the HUSP-ULL algorithm can fast discover the complete set of HUSPs. To further improve the performance of the proposed HUSP-ULL algorithm, two pruning strategies were introduced to reduce the upper-bounds on utility and thus prune the search space to find  HUSPs. Substantial experiments on some real datasets show that the designed algorithm can effectively and efficiently identify all HUSPs and outperforms the state-of-the-art HUSPM algorithms. The proposed pruning strategies also improves the effectiveness and efficiency for mining HUSPs by reducing the number of candidates. 

In the future, several extensions of the proposed algorithm can be considered such as to design more efficient algorithms, mine HUSPs in big data, or extending the model to other pattern mining problems.

\section{Acknowledgment}
We would like to thank  Dr. Jun-Zhe Wang for providing the original C++ code of the HUS-Span algorithm, and Dr.  Oznur Kirmemis Alkan for sharing the executable file of the HuspExt algorithm.  This research was partially supported by the China Scholarship Council Program.

\ifCLASSOPTIONcaptionsoff
  \newpage
\fi



\bibliographystyle{IEEEtran}
\bibliography{main}
%




\vspace{-1cm}
\begin{IEEEbiography}[{\includegraphics[width=1in,height=1.25in,clip,keepaspectratio]{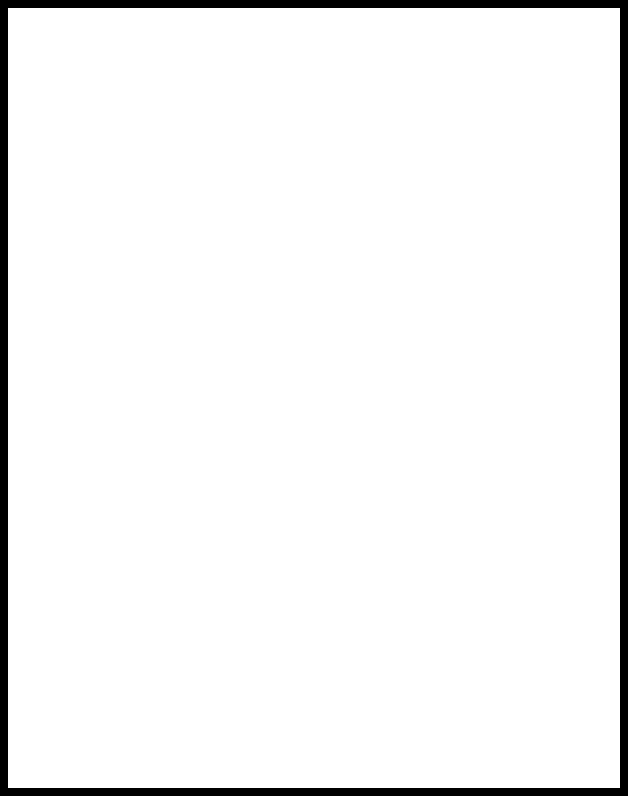}}]{Wensheng Gan} received the Ph.D. in Computer Science and Technology, Harbin Institute of Technology (Shenzhen), Guangdong, China in 2019. He received the B.S. degree in Computer Science from South China Normal University, Guangdong, China in 2013. His research interests include data mining, utility computing, and big data analytics. He has published more than 50 research papers in  peer-reviewed journals and international conferences, which have received more than 500 citations.
\end{IEEEbiography}

\vspace{-1cm}
\begin{IEEEbiography}[{\includegraphics[width=1in,height=1.25in,clip,keepaspectratio]{newAuthor.png}}]{Jerry Chun-Wei Lin (SM'19)}
	is an associate professor at Western Norway University of Applied Sciences, Bergen, Norway. He received the Ph.D. in Computer Science and Information Engineering, National Cheng Kung University, Tainan, Taiwan in 2010. His research interests include data mining, big data analytics, and social network. He has published more than 300 research papers in peer-reviewed international conferences and journals, which have received more than 3000 citations. He is the co-leader of the popular SPMF open-source data mining library and the Editor-in-Chief (EiC) of the \textit{Data Mining and Pattern Recognition} (DSPR) journal, and Associate Editor of \textit{Journal of Internet Technology}. 
\end{IEEEbiography}

\vspace{-1cm}
\begin{IEEEbiography}[{\includegraphics[width=1in,height=1.25in,clip,keepaspectratio]{newAuthor.png}}]{Jiexiong Zhang}
	is currently a senior software engineer in Didi Chuxing, Beijing, China. He received the M.S. degrees in Computer Science from Harbin Institute of Technology (Shenzhen), Guangdong, China in 2017. His research interests include data mining, artificial intelligence, and big data analytics. 
\end{IEEEbiography}

\vspace{-2cm}
\begin{IEEEbiography}[{\includegraphics[width=1in,height=1.25in,clip,keepaspectratio]{newAuthor.png}}]{Philippe Fournier-Viger}
	is full professor and Youth 1000 scholar at the Harbin Institute of Technology (Shenzhen), Shenzhen, China. He received a Ph.D. in  Computer Science at the University of Quebec in Montreal (2010). His research interests include pattern mining, sequence analysis and prediction, and social network mining. He has published more than 250 research papers in refereed international conferences and journals. He is the founder of the popular SPMF open-source data mining library, which has been cited in more than 800 research papers. He is Editor-in-Chief (EiC) of the \textit{Data Mining and Pattern Recognition} (DSPR) journal.
\end{IEEEbiography}

\vspace{-2cm}
\begin{IEEEbiography}[{\includegraphics[width=1in,height=1.25in,clip,keepaspectratio]{newAuthor.png}}]{Han-Chieh Chao (SM'04)}
	has been the president of National Dong Hwa University since February 2016. He received M.S. and Ph.D. degrees in Electrical Engineering from Purdue University in 1989 and 1993, respectively. His research interests include high-speed networks, wireless networks, IPv6-based networks, and artificial intelligence. He has published nearly 500 peer-reviewed professional research papers. He is the Editor-in-Chief (EiC) of IET Networks and \textit{Journal of Internet Technology}. Dr. Chao has served as a guest editor for ACM MONET, IEEE JSAC, \textit{IEEE Communications Magazine}, \textit{IEEE Systems Journal}, \textit{Computer Communications}, \textit{IEEE Proceedings Communications}, \textit{Wireless Personal Communications}, and \textit{Wireless Communications \& Mobile Computing}. Dr. Chao is an IEEE Senior Member and a fellow of IET. 
\end{IEEEbiography}

\vspace{-2cm}
\begin{IEEEbiography}[{\includegraphics[width=1in,height=1.25in,clip,keepaspectratio]{newAuthor.png}}]{Philip S. Yu (F'93)}
	received the B.S. degree in electrical engineering from National Taiwan University, M.S. and Ph.D. degrees in electrical engineering from Stanford University, and an MBA from New York University. He is a distinguished professor of computer science with the University of Illinois at Chicago (UIC) and also holds the Wexler Chair in Information Technology at UIC. Before joining UIC, he was with IBM, where he was manager of the Software Tools and Techniques Department at the Thomas J. Watson Research Center. His research interests include data mining, data streams, databases, and privacy. He has published more than 1,300 papers in peer-reviewed journals (i.e., TKDE, TKDD, VLDBJ, ACM TIST) and conferences (KDD, ICDE, WWW, AAAI, SIGIR, ICML, etc). He holds or has applied for more than 300 U.S. patents. Dr. Yu was the Editor-in-Chief of \textit{ACM Transactions on Knowledge Discovery from Data}. He received the ACM SIGKDD 2016 Innovation Award, and the IEEE Computer Society 2013 Technical Achievement Award. Dr. Yu is a fellow of the ACM and the IEEE.
\end{IEEEbiography}


%

%






\end{document}